\definecolor{magenta(dye)}{rgb}{0.79, 0.08, 0.48}
\definecolor{mediumred-violet}{rgb}{0.73, 0.2, 0.52}
\definecolor{neonfuchsia}{rgb}{1.0, 0.25, 0.39}
\definecolor{orange-red}{rgb}{1.0, 0.27, 0.0}
\definecolor{americanrose}{rgb}{1.0, 0.01, 0.24}
\definecolor{awesome}{rgb}{1.0, 0.13, 0.32}
\definecolor{blue}{rgb}{0.0, 0.0, 1.0}
\definecolor{cadmiumred}{rgb}{0.89, 0.0, 0.13}
\definecolor{candyapplered}{rgb}{1.0, 0.03, 0.0}
\definecolor{electricultramarine}{rgb}{0.25, 0.0, 1.0}
\definecolor{intblue}{rgb}{0.0, 0.18, 0.65}
\definecolor{navyblue}{rgb}{0.0, 0.0, 0.5}
\definecolor{jade}{rgb}{0.0, 0.66, 0.42}
\def\Tr{\operatorname{Tr}}
\newcommand{\mc}[1]{\mathcal{#1}}
\newcommand{\mf}[1]{\mathfrak{#1}}
\newcommand{\wt}[1]{\widetilde{#1}}
\newcommand{\qh}[1]{\hat{q_{#1}}}
\newcommand{\ph}[1]{\hat{p_{#1}}}
\newcommand{\io}{\mathfrak{i}}
\newcommand{\tm}{\mathrm{th}}
\newcommand{\diag}[1]{\mathrm{diag}\{#1\}}
\newcommand{\Sp}{\mathrm{Sp}(2n,\mathbb{R})}
\newcommand{\Mr}[1]{\mathrm{#1}}
\newcommand{\mb}[1]{\mathbb{#1}}
\newcommand{\ord}[1]{O\left(n^{#1}\right)}
\newcommand{\Norm}[2]{\left|\left|{#1}\right|\right|_{#2}}
\newcommand{\NormSmall}[2]{||{#1}||_{#2}}
\newcommand{\Mod}[1]{\left|{#1}\right|}
\newtheorem{theorem}{Theorem}
\newtheorem{definition}{Definition}
\newtheorem{lemma}{Lemma}
\newtheorem{proposition}{Proposition}
\newtheorem{remark}{Remark}
\begin{document}

\title{Gaussian work extraction from random Gaussian states is nearly impossible}

\author{Uttam Singh}  
\email{uttam@iiit.ac.in}
\affiliation{Centre for Quantum Information and Communication, \'{E}cole polytechnique de Bruxelles,  CP 165, Universit\'{e} libre de Bruxelles, 1050 Brussels, Belgium}
\affiliation{Center for Theoretical Physics, Polish Academy of Sciences, Aleja Lotnikow 32/46, 02-668 Warsaw, Poland}
\affiliation{Centre of Quantum Science and Technology, International Institute of Information Technology, Hyderabad 500032, India}
\author{Jaros\l{}aw K. Korbicz}  
\email{jkorbicz@cft.edu.pl}
\affiliation{Center for Theoretical Physics, Polish Academy of Sciences, Aleja Lotnikow 32/46, 02-668 Warsaw, Poland}
\author{Nicolas J. Cerf}  
\email{ncerf@ulb.ac.be}
\affiliation{Centre for Quantum Information and Communication, \'{E}cole polytechnique de Bruxelles,  CP 165, Universit\'{e} libre de Bruxelles, 1050 Brussels, Belgium}
\affiliation{James C. Wyant College of Optical Sciences, University of Arizona, Tucson, Arizona 85721, USA}


\begin{abstract}
Quantum thermodynamics can be naturally phrased as a theory of quantum state transformation and energy exchange for small-scale quantum systems undergoing thermodynamical processes, thereby making the resource theoretical approach very well suited. A key resource in thermodynamics is the extractable work, forming the backbone of thermal engines. Therefore it is of interest  to characterize quantum states based on their ability to serve as a source of work. From a near-term perspective, quantum optical setups turn out to be ideal test beds for quantum thermodynamics; so it is important to assess work extraction from quantum optical states. Here, we show that Gaussian states are typically useless for Gaussian work extraction. More specifically, by exploiting the ``concentration of measure'' phenomenon, we prove that the probability that the Gaussian extractable work from a zero-mean energy-bounded multimode random Gaussian state is nonzero is exponentially small. This result can be thought of as an $\epsilon$-no-go theorem for work extraction from Gaussian states under Gaussian unitaries, thereby revealing a fundamental limitation on the quantum thermodynamical usefulness of Gaussian components.
\end{abstract}

\maketitle


{\it Introduction.---}In the wake of the rapid technological advancements making the control and efficient manipulation of single quantum systems experimentally possible, it has become necessary to address the energetics of nanoscale devices \cite{Scovil1959, Geusic1967, Howard1997, Scully2002, Scully2003, Hanggi2009, Dechant2014, Rossnage2016, Klatzow2019}. Quantum thermodynamics is a burgeoning field of research broadly aimed at systematically addressing this question and, in particular, at challenging the applicability of classical thermodynamics at atomic scales, where quantum effects are inescapable \cite{Thermo2018, Deffner2019, Goold2016, Vinjanampathy2016, Lostaglio2019, Talkner2020}. A number of approaches to a theory of quantum thermodynamics have been developed, including, notably, a quantum resource-theory-based formalism \cite{Michal2013, Brandao2013, Skrzypczyk2014, Brandao2015b, Singh2019, Singh2021}, a purely information theoretic framework \cite{Bera2017, Bera2019}, and open-systems dynamics \cite{Alicki1979, Uzdin2015} (see also a recent book \cite{Thermo2018}). To complement the theoretical efforts towards quantum thermodynamics, there are also exciting new experiments \cite{Rossnage2016, Clos2016, Klatzow2019, Maslennikov2019} that confirm the distinctive features of quantum engines that have been theoretically predicted. Furthermore, quantum effects have been shown to offer advantages in charging quantum batteries \cite{Ferraro2018} and in heat bath algorithmic cooling \cite{Schulman2005, Rodriguez2017}.

Despite tremendous experimental progress in designing quantum thermal machines, quantum thermodynamics is still largely a theoretical endeavor, and more experimental models are needed to confirm the theoretical predictions. It is well established that Gaussian quantum optical states can readily be prepared in the laboratory and Gaussian quantum operations can be implemented efficiently; hence quantum optical setups form a uniquely suited test bed for quantum thermodynamics (see, e.g., Ref. \cite{Dechant2014}). Given that these are central features of quantum thermodynamics, 
work extraction and battery charging have then been investigated in Refs. \cite{Brown2016,Friis2018precisionwork} when restricted to Gaussian operations. More generally, a theory of Gaussian work extraction from multipartite Gaussian states has also been developed in Ref. \cite{Singh2019}. Interestingly, the total amount of work that can be extracted using Gaussian unitaries from a (zero-mean) multipartite Gaussian state was proven to be equal to the difference between the trace and symplectic trace of the covariance matrix \cite{Singh2019} [see Eq. \eqref{eq:loc-gaus-work}]. In order to benchmark the experimental usefulness of such a Gaussian framework for quantum thermodynamics, it is therefore essential to resolve the question of what is the amount of work that can be extracted with Gaussian unitaries if we start from a random multimode Gaussian state? Here, we solve this question by exploiting the ``concentration of measure'' phenomenon~\cite{Ledoux2005}, which states, broadly speaking, that a sufficiently smooth function on a measurable probability space concentrates around its expected value (see also Refs. \cite{Anderson2009, Hayden2006, Popescu2006,  Adesso2006,  Serafini2007c, Serafini2007, Collins2006, Collins2010, Collins2013, Raginsky2013, Fukuda2019b}).


We start by introducing a procedure to sample energy-bounded random covariance matrices corresponding to a uniform measure on the set of multipartite Gaussian states, following Refs.~\cite{Serafini2007, Fukuda2019b}. As a first technical result, we then show that such random covariance matrices are, typically, locally thermal. Building on this, we prove that the probability that the Gaussian extractable work from a zero-mean energy-bounded random Gaussian state is nonzero is exponentially small. This can be interpreted as an $\epsilon$-no-go theorem for work extraction from random Gaussian states under Gaussian unitaries (see Fig.~\ref{fig:f1}), where $\epsilon>0$ denotes the  work that could potentially be extracted. 
We then discuss the impact of this fundamental near  impossibility on quantum thermodynamics in the Gaussian regime.

\begin{figure}
\centering
\includegraphics[width=\columnwidth]{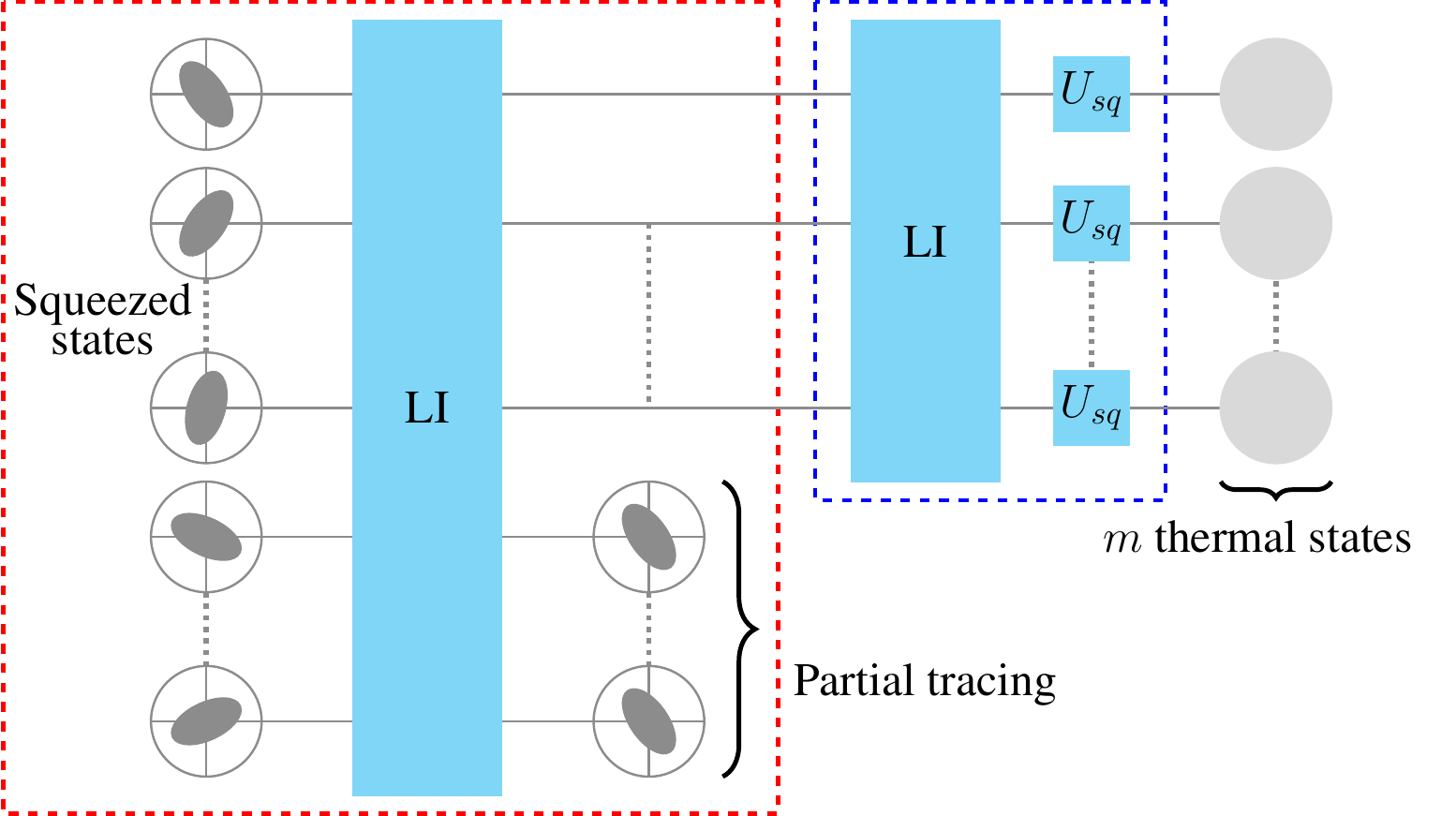}
\caption{Schematic of Gaussian work extraction from random Gaussian states. The dashed red rectangle represents the preparation of energy-bounded $m$-mode random Gaussian mixed states starting from the tensor product of $n$~random squeezed vacuum states, while the dashed blue rectangle represents Gaussian work extraction. In both rectangles, the LI box stands for a linear interferometer (an array of beam splitters and phase shifters). Our main result is the proof that the state emerging from the dashed red rectangle is typically a product of thermal states and hence no work can be extracted by Gaussian means (see Theorem~\ref{th:main-conc}).}
\label{fig:f1}
\end{figure}

\medskip
{\it Gaussian states.---}Consider an $n$-mode bosonic system described by Hilbert space $\mc{H}_n:=L^2(\mb{R})^{\otimes n}$ and let ${\bf\hat{x}}=\left(\qh{1},\cdots, \qh{n}, \ph{1},\cdots,\ph{n}\right)$ be the canonical position and momentum operators. They satisfy the canonical commutation relations $
\left[\hat{x}_i,\hat{x}_j\right]=\mathfrak{i}\, \Omega_{ij} \, \mb{I}_{\mc{H}_n} (i,j=1,\cdots, 2n)$, where we set $\hbar=1$ and $\mf{i}=\sqrt{-1}$, 
\begin{align}
\pmb\Omega=\begin{pmatrix}
0 & \mb{I}_n\\
-\mb{I}_n & 0
\end{pmatrix},
\end{align}
and $\mb{I}_n$ is an $n\times n$ identity matrix. The  Hamiltonian of the $i$th mode is given by $\hat{H}_i:=(\hat{q}_i^2+\hat{p}_i^2)/2$ assuming that the angular frequencies of all modes are equal to $1$. For an arbitrary $n$-mode state $\rho$, the $2n$-dimensional mean vector (or coherence vector) ${\bf\bar{x}}$ is defined as
${\bf\bar{x}}:=\left\langle {\bf\hat{x}}\right\rangle =\Mr{Tr}(\rho \, {\bf\hat{x}})$, 
where the angular bracket $\langle\bullet\rangle$ denotes the expectation value with respect to $\rho$. Similarly, the $2n\times 2n$ real positive-definite covariance matrix $\pmb{\Gamma}$ of state $\rho$ is defined via the second-order moments as $
\Gamma_{ij}:=\frac{1}{2}\left\langle \left\{  \hat{x}_{i}-\langle\hat{x}_{i}\rangle,\hat{x}_{j}-\langle\hat{x}_{j}\rangle\right\}  \right\rangle $, 
where $\{ \bullet , \bullet \}$ is the anti commutator.  Gaussian states are states whose characteristic function is Gaussian; hence they are completely described by their mean vector and covariance matrix \cite{Weedbrook2012}. For example, a thermal state is a Gaussian state with ${\bf\bar{x}}=0$ and $\pmb{\Gamma}=(\bar{n}+1/2)\, \mb{I}_{2n}$, where $\bar{n}$ is the average photon number per mode.

\medskip
{\it Gaussian unitaries.---}Gaussian unitaries are defined as unitaries that map Gaussian states onto Gaussian states. In particular, a Gaussian unitary $U$ in state space induces an affine map  $(\mc{S}, {\bf d}):{\bf \hat{x}}\rightarrow \mc{S}{\bf\hat{x}}+{\bf d}$ in the space of  quadrature operators ${\bf\hat{x}}$, where  $\mc{S}\in \Sp$ is a $2n\times 2n$ real symplectic matrix (such that $\mc{S}\, \pmb\Omega \, \mc{S}^T=\pmb\Omega$) and ${\bf d}$ is a $2n$-dimensional real vector (displacement vector) \cite{Weedbrook2012}. Thus a Gaussian unitary can be written as $U_{\mc{S}, {\bf d}}= D_{\bf d} \, U_{\mc{S}}$, where $U_{\mc{S}}$ corresponds to the symplectic map  ${\bf \hat{x}}\rightarrow \mc{S}\, {\bf \hat{x}}$ and the Weyl operator $D_{\bf d}$ corresponds to the map ${\bf \hat{x}}\rightarrow {\bf \hat{x}}+{\bf d}$. Under Gaussian unitaries, the first- and second-order moments transform as ${\bf \bar{x}}\rightarrow\mc{S}{\bf \bar{x}}+{\bf d}$ and $\pmb\Gamma\rightarrow\mc{S}\pmb\Gamma\mc{S}^T$. Of special importance to us are the energy-conserving (or passive) Gaussian unitaries, which induce orthogonal symplectic transformations $\mc{S}\in \Sp\cap \Mr{O}(2n) \equiv \Mr{K}_n$ on the quadrature operators, where $\Sp$ is the group of real $2n\times 2n$ symplectic matrices and $\Mr{O}(2n)$ is the group of real orthogonal $2n\times 2n$ matrices.  Physically, passive Gaussian unitaries comprise all linear-optical circuits, also called as linear interferometers (LIs). The other Gaussian unitaries that are relevant here are squeezers. For example, a single-mode squeezer induces the symplectic transformation $\mc{S}=\diag{z,z^{-1}}$, where $z=e^r$ with $r\in\mb{R}$ being the squeezing parameter.

\medskip
{\it Gaussian extractable work.---}The energy of an arbitrary quantum state only depends on the first two moments, ${\bf \bar{x}}$ and $\pmb\Gamma$. In particular, the energy of an $m$-mode state $\rho$ is simply given by $\Mr{Tr}[\rho\, \hat{H}]=\frac{1}{2} \left(\mathrm{Tr}[\pmb\Gamma] +  |\bf\bar{x}|^2\right)$, where $\hat{H}=\sum_{i=1}^m\hat{H}_i$ is the total Hamiltonian \cite{Weedbrook2012}. Now, for any state $\rho$ (Gaussian or otherwise), we define the Gaussian extractable work as the maximum decrease in energy under Gaussian unitaries. Given the decoupled structure of Gaussian unitaries as $U_{\mc{S}, {\bf d}} = D_{\bf d} \, U_{\mc{S}} = U_{\mc{S}} \, D_{\bf d'}$, we can always separate the Gaussian extractable work into two components associated, respectively, with $\mc{S}$ and ${\bf d'}$. Starting from a state with $ {\bf \bar{x}} \ne 0$, it is trivial to extract work first via displacement $D_{\bf d'}$ up to the point where ${\bf \bar{x}}=0$, thereby making this component uninteresting (it can be viewed as classical). Thus we may restrict ourselves to zero-mean states with no loss of generality. Furthermore, using the Bloch-Messiah decomposition \cite{Arvind1995, Braunstein2005}, $U_{\mc{S}}$ can be written as the concatenation of a linear interferometer (LI), a layer of single-mode squeezers, and a second LI. Since the latter leaves the energy unchanged, we may disregard it. Hence the Gaussian extractable work from a zero-mean state $\rho$ is given by (see dashed blue rectangle in Fig.~\ref{fig:f1})
\begin{align}
W(\rho)&= \max_{\Mr{LI},U_{sq}}\mathrm{Tr}\left[\hat{H}\left(\rho-U_{sq} \, \Mr{LI}[  \rho ] \, U_{sq}^{\dagger} \right)\right],
\end{align} 
where $U_{sq}$ denotes the squeezing unitary which corresponds to the tensor product of single-mode squeezers. This maximization yields a particularly simple expression in the phase-space picture, namely  \cite{Singh2019},
\begin{align}
\label{eq:loc-gaus-work}
W(\rho) \equiv \mc{W}(\pmb\Gamma)&=\frac{1}{2}\left(\Mr{Tr}\left[ \pmb\Gamma\right] - \Mr{STr}\left[\pmb\Gamma \right]\right),
\end{align} 
where $\Mr{STr}[\pmb\Gamma]$ denotes the symplectic trace of the covariance matrix $\pmb\Gamma$, i.e.,  the sum of all its symplectic eigenvalues. Note that the Gaussian extractable work $W(\rho)$ is solely a function of $\pmb\Gamma$, so we simply note it $\mc{W}(\pmb\Gamma)$.

\medskip
{\it Random sampling of Gaussian states.---}In order to establish the typical behavior of the Gaussian extractable work, we need to give a prescription for the random sampling of covariance matrices.
Consider first an $n$-mode pure Gaussian state. Its covariance matrix $\pmb\Gamma$ can be obtained by applying some Gaussian unitary to the vacuum state; that is, it can be written as $\pmb\Gamma=\mc{S}\mc{S}^T/2$, where $\mc{S}\in\Sp$. From the Bloch-Messiah decomposition, we have $\mc{S}={\bf O} \, [Z({\bf z})\oplus Z^{-1}({\bf z})] \, {\bf O}'$, where ${\bf O}, {\bf O}' \in \Mr{K}_n$ and $Z({\bf z})\oplus Z^{-1}({\bf z})$ is a collection of $n$ single-mode squeezers and $Z({\bf z})=\diag{z_1,\cdots, z_n}$ with $z_i\geq 1$ for all $1\leq i\leq n$. Therefore the covariance matrix is written as
\begin{align}
\pmb\Gamma =\frac{1}{2} {\bf O} \, [J({\bf z})\oplus J^{-1}({\bf z})]  \, {\bf O}^T,
\end{align}
where $J({\bf z})=Z({\bf z})^2$. Now, to define a random covariance matrix, we need to sample ${\bf O}$ and $J({\bf z})$ with appropriate probability measures on their respective spaces. While $\Mr{K}_n$ is a compact space and admits an invariant Haar measure, the space of matrices $J({\bf z})$ is not compact and does not admit a natural invariant  normalizable measure. To properly define the sampling of a random $\pmb\Gamma$, we need some compactness constraint, which can be provided by imposing an energy bound, $\mathrm{Tr}[\pmb\Gamma] \leq 2E$, where $E$ is fixed (remember that ${\bf \bar{x}}=0$). Following Refs. \cite{Serafini2007, Fukuda2019b}, we can sample random matrices $J({\bf z})$ by randomly choosing the vector ${\bf z}$ via the flat Lebesgue measure on the set
\begin{align*}
\mc{G}_{E}:=\left\{(z_1,\cdots, z_n) \,\Big|\,  z_i\geq 1 ~\text{and}~ \sum_{i=1}^n\left(z_i^2+z_i^{-2}\right)\leq 4E \right\};
\end{align*}
that is, we define
the measure $\Mr{d}\mu_{\bf z}=\Mr{d}{z_1}\cdots \Mr{d}{z_n}/{\Mr{Vol}(\mc{G}_E)}$, where $\Mr{Vol}(\mc{G}_E)$ is the volume of $\mc{G}_E$ \footnote{We will not be needing this probability measure in this Research Letter as we will prove our main result for any choice of vector ${\bf z}$ satisfying the energy constraint (see also Ref. \cite{Fukuda2019b}).}.
Then, using the fact that $\Mr{K}_n$ is isomorphic to the complex unitary group $\Mr{U}(n):=\{U\in\mb{C}^{n\times n}:U^\dagger U=\mb{I}_n\}$, we generate random ${\bf O}\in \Mr{K}_n$ via the invariant measure on $\Mr{K}_n$ induced by the Haar measure on $\Mr{U}(n)$. Second, a natural probability measure on the set of energy-bounded random Gaussian mixed states can be induced by performing a partial trace on the random pure state where ${\bf z}$ and ${\bf O}$ are sampled according to the above measures \footnote{We note that an alternative way to define a probability measure on the set of nonzero-mean energy-bounded Gaussian mixed states has been considered in Ref \cite{Lupo2012}.}. 
Note that a Gaussian purification argument  can be used as illustrated in Fig.~\ref{fig:f2} in order to show that the full state can be taken to be pure with no loss of generality (see Proposition \ref{prop:puri} of the Supplemental Material \footnote{See Supplemental Material for additional results, detailed proofs, and extended discussions.}).
The partial trace of $N=n-m$ out of $n$ modes (see Fig.~\ref{fig:f2}) corresponds to the map $\pmb\Pi_{m,N}$ on the covariance matrix defined as
\begin{align}
\label{eq:mixed-rand}
 \pmb\Gamma\mapsto  \pmb\Gamma_m  := \pmb\Pi_{m,N}\, \pmb\Gamma \, \pmb\Pi_{m,N}, 
\end{align}
where $\pmb\Pi_{m,N}=\wt{\pmb\Pi}\oplus \wt{\pmb\Pi}$ and $\wt{\pmb\Pi}=\diag{\overbrace{1,\cdots,1}^{m},\overbrace{0,\cdots,0}^{N=n-m}}$. Let $\mc{L}_{m,E}$ be the resulting set of covariance matrices $ \pmb\Gamma_m$ for $m$-mode energy-constrained Gaussian mixed states, i.e., $\Mr{Tr}[\pmb\Gamma_m]\leq 2E$ for $\pmb\Gamma_m\in \mc{L}_{m,E}$.  We shall now establish the typicality of the extractable work in $\mc{L}_{m,E}$.

\begin{figure}
\centering
\includegraphics[width=\columnwidth]{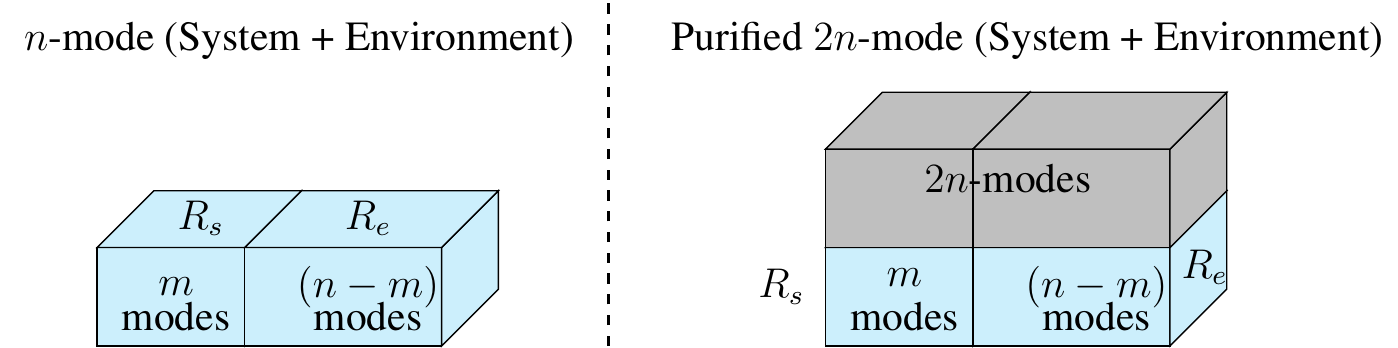}
\caption{The schematic on the left depicts the $m$-mode system of interest (from which work is extracted) denoted as the region $R_s$, which is part of an $n$-mode system. The remaining $N=n-m$ modes constitute the environment, denoted as the region $R_e$. Note that in the most general scenario, the state of the $n$-mode region $R_s+R_e$ can be mixed. However, from Proposition 1 of the Supplemental Material \cite{Note3}, an energy-bounded mixed Gaussian state can be purified into an energy-bounded pure Gaussian state. The schematic on the right represents such a purification of the $n$-mode region into a $2n$-mode region. Now, the state of the $m$-mode system of interest is obtained by performing a partial trace of $(2n-m)$ modes on a pure $2n$-mode state.}
\label{fig:f2}
\end{figure}

\medskip
{\it Typicality of Gaussian extractable work.---}We consider a physical scenario where the system of interest is interacting with an inaccessible large environment. This is a common setting in the theory of decoherence  responsible for loss of coherence in quantum systems as a consequence of the partial trace of the environment \cite{Schlosshauer2007}. In particular, in our case the system of interest is a zero-mean energy-bounded $m$-mode Gaussian system embedded in a large Gaussian environment comprising $N \gg m$ modes, so we must characterize the scaling in $n=m+N$ of the energy constraint on the full $n$-mode system as follows.

\begin{definition}
\label{def:bound-rand}
An $m$-mode random Gaussian state is said to be polynomially energy bounded of degree $\beta\geq 0$ if it results from partial  tracing an $n$-mode random Gaussian pure state over $N=n-m$ modes and its covariance matrix can be written as [cf. Eq.~\eqref{eq:mixed-rand}]
\begin{align}
\label{eq:rand-m+n}
\pmb\Gamma_{m}:=\frac{1}{2}\, \pmb\Pi_{m,N} \, {\bf O}_{n} \, \wt{J}_{n}({\bf z}) \, {\bf O}_{n}^T \, \pmb\Pi_{m,N},
\end{align}
where ${\bf O}_n\in \Mr{K}_n$ and the sequence $\wt{J}_{n}({\bf z}):=J_{n}({\bf z})\oplus J^{-1}_{n}({\bf z})$ is such that 
$\NormSmall{\wt{J}_{n}({\bf z})}{\infty}=O\left(n^{\beta}\right)$  \footnote{We say that a function $f:\mb{R}\rightarrow\mb{R}^+$ scales as $O(n)$ when for all sufficiently large $x\in\mb{R}$, there exists a constant $c>0$ such that $f(x)\leq cn$. Similarly, we say that a function $f:\mb{R}\rightarrow\mb{R}^+$ scales as $\Omega(n)$ when for all sufficiently large $x\in\mb{R}$, there exists a constant $c>0$ such that $f(x)\geq cn$.}.
\end{definition}
Since ${\bf O}_n$ is passive, the $n$-mode random Gaussian pure state of covariance matrix $\pmb\Gamma_{n}:=\frac{1}{2}{\bf O}_n\wt{J}_{n}({\bf z}) {\bf O}_n^T$ has energy $E_{\beta}=\ord{\beta+1}$. Then, it can be used to generate a polynomially energy-bounded $m$-mode random Gaussian mixed state according to Definition \ref{def:bound-rand} (see Fig. \ref{fig:f2}). 

\medskip
Before proving our main result, we first need to establish the asymptotic behavior of the eigenspectrum and symplectic eigenspectrum of the energy-bounded random covariance matrices $\pmb\Gamma_m$ from Definition \ref{def:bound-rand}, which is the content of the following two lemmas (their proofs are provided in the Supplemental Material \cite{Note3}).

\begin{lemma}
\label{lem:reg-spectra}
Let $\pmb\Gamma_{m}$ be the covariance matrix of an $m$-mode polynomially energy-bounded random Gaussian state of degree $\beta<1/4$ (see Definition \ref{def:bound-rand}). For universal constants $\gamma,\wt{\gamma}>0$ such that $\epsilon > 2\gamma n^{\beta-1}$, the eigenvalues $\{\lambda_i\}_{i=1}^{2m}$ of $\pmb\Gamma_m$ converge in probability to $\nu_{\tm}$, i.e.,
\begin{align}
\Mr{Pr}\left[ \sum_{i=1}^{2m}\left(\lambda_i-\nu_{\tm}\right)^2>\epsilon \right]\leq \exp\left[ -\wt{\gamma} \epsilon^2 n^{1-4\beta} \right],
\end{align}
where 
$\nu_\tm=\Tr[\wt{J}_{2n}({\bf z})]/(8n)$.
\end{lemma}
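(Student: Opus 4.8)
The plan is to recast the left-hand side as a squared Hilbert--Schmidt distance and apply concentration of measure directly to it. After purifying to the $2n$-mode pure state (Fig.~\ref{fig:f2}), the only source of randomness is the Haar-distributed passive transformation ${\bf O}_{2n}\in\Mr{K}_{2n}\cong\Mr{U}(2n)$, and I would write $g({\bf O}_{2n}):=\sum_{i=1}^{2m}(\lambda_i-\nu_{\tm})^2=\Norm{\pmb\Gamma_m-\nu_{\tm}\mb{I}_{2m}}{2}^2$, with $\Norm{\cdot}{2}$ the Hilbert--Schmidt norm. The argument rests on three facts: $\mathbb{E}[\pmb\Gamma_m]=\nu_{\tm}\mb{I}_{2m}$, so $\nu_{\tm}$ is the correct center; $\mathbb{E}[g]$ is small; and $g$ is Lipschitz with a controlled constant.

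First I would compute $\mathbb{E}[\pmb\Gamma_m]$. Exploiting the symmetries $U\mapsto e^{\io\theta}U$ and $U\mapsto\bar U$ of the Haar measure on $\Mr{U}(2n)$ (via $\Mr{K}_{2n}\cong\Mr{U}(2n)$), the off-diagonal blocks of $\mathbb{E}\big[{\bf O}_{2n}\wt{J}_{2n}{\bf O}_{2n}^{T}\big]$ average to zero while the diagonal blocks collapse to $\tfrac{1}{4n}\Tr[\wt{J}_{2n}]\,\mb{I}$; after the partial trace $\pmb\Pi_{m,N}$ and the prefactor $1/2$ this gives $\mathbb{E}[\pmb\Gamma_m]=\nu_{\tm}\mb{I}_{2m}$ with $\nu_{\tm}=\Tr[\wt{J}_{2n}]/(8n)$, exactly as stated. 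Hence $\mathbb{E}[g]=\sum_{i,j}\mathrm{Var}\big[(\pmb\Gamma_m)_{ij}\big]$, which I would bound by second-moment (Weingarten) integrals over $\Mr{U}(2n)$: the entrywise variance of ${\bf O}_{2n}\wt{J}_{2n}{\bf O}_{2n}^{T}$ scales as $\Tr[\wt{J}_{2n}^{2}]/n^{2}$, and the energy constraint with $\NormSmall{\wt{J}_{2n}}{\infty}=\ord{\beta}$ forces $\Tr[\wt{J}_{2n}^{2}]=\ord{2\beta+1}$, whence $\mathbb{E}[g]=\ord{2\beta-1}$ for a fixed number $m$ of system modes. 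The hypothesis $\epsilon>2\gamma\,n^{\beta-1}$ enters here, ensuring that $\mathbb{E}[g]$ is negligible compared with $\epsilon$ so that the effective deviation $t=\epsilon-\mathbb{E}[g]$ is at least $\epsilon/2$.

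For the Lipschitz bound I would first note, via eigenvalue interlacing, that every $\lambda_i$ is bounded by $\tfrac12\NormSmall{\wt{J}_{2n}}{\infty}=\ord{\beta}$, so that $\sqrt{g}=\ord{\beta}$ uniformly. Combining this a priori spectral bound with the elementary estimate $\Norm{{\bf O}_1\wt{J}_{2n}{\bf O}_1^{T}-{\bf O}_2\wt{J}_{2n}{\bf O}_2^{T}}{2}\le 2\NormSmall{\wt{J}_{2n}}{\infty}\Norm{{\bf O}_1-{\bf O}_2}{2}$ yields a Lipschitz constant $L_g=\ord{2\beta}$ for $g=(\sqrt{g})^2$. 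Finally I would invoke Levy's lemma on $\Mr{U}(2n)$, $\Mr{Pr}\big[g-\mathbb{E}[g]>t\big]\le\exp\!\big(-c\,n\,t^2/L_g^2\big)$, with $t\ge\epsilon/2$; substituting $L_g=\ord{2\beta}$ produces the exponent $-\wt{\gamma}\,\epsilon^2\,n^{1-4\beta}$, the restriction $\beta<1/4$ being exactly what makes it diverge with $n$.

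The hard part will be twofold. The first difficulty is the computation of $\mathbb{E}[g]$: the block structure $\wt{J}_{2n}=J_{2n}\oplus J_{2n}^{-1}$ is not of the complex-linear form, so standard $\Mr{U}(n)$ Weingarten formulas do not apply verbatim and the second moments must be evaluated through the explicit $\Mr{K}_{2n}\cong\Mr{U}(2n)$ correspondence. The second is obtaining the $\ord{2\beta}$ Lipschitz constant rather than a worse one, which hinges on promoting the global estimate to $g=(\sqrt{g})^2$ using the interlacing bound; a naive bound on $\sqrt{g}$ would cost additional powers of $n$ and spoil the $\beta<1/4$ threshold.
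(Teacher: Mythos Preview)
Your approach is essentially identical to the paper's: the paper too defines $T_m(U):=\Norm{\pmb\Gamma_m-\nu_{\tm}\mb{I}_{2m}}{2}^2$, computes $\mathbb{E}_U T_m$ via Weingarten integrals on $\Mr{U}(2n)$, bounds the Lipschitz constant of $T_m$ by $\ord{2\beta}$, and then applies the concentration inequality on $\Mr{U}(2n)$. Your route to the Lipschitz constant via $g=(\sqrt g)^2$ and eigenvalue interlacing is a cosmetic variant of the paper's direct expansion $|T(U)-T(V)|\le(\Norm{\pmb\Gamma_m}{\infty}+\Norm{\wt{\pmb\Gamma}_m}{\infty}+2\nu_{\tm})\Norm{\pmb\Gamma_m-\wt{\pmb\Gamma}_m}{1}$ and lands on the same $L=\ord{2\beta}$.

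There is one quantitative slip in your expectation bound. From $\NormSmall{\wt{J}_{2n}}{\infty}=\ord{\beta}$ alone you get $\Tr[\wt{J}_{2n}^{2}]\le 4n\,\NormSmall{\wt{J}_{2n}}{\infty}^2=\ord{2\beta+1}$ and hence $\mathbb{E}[g]=\ord{2\beta-1}$. But for $\beta>0$ one has $n^{2\beta-1}\gg n^{\beta-1}$, so this does \emph{not} secure the threshold $\epsilon>2\gamma n^{\beta-1}$ in the lemma; with your bound you would only be entitled to $\epsilon>\mathrm{const}\cdot n^{2\beta-1}$, and the step ``$t=\epsilon-\mathbb{E}[g]\ge\epsilon/2$'' fails as stated. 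The paper sharpens this by also using the $1$-norm (energy) information: with $B=(J_{2n}+J_{2n}^{-1})/2$ one has $\Norm{B}{1}=4n\nu_{\tm}$ and, taking $\nu_{\tm}$ uniformly bounded, $\Tr[B^2]\le\Norm{B}{1}\Norm{B}{\infty}=\ord{\beta+1}$, which yields the needed $\mathbb{E}_U T_m=\ord{\beta-1}$. Once you replace your crude $\Tr[\wt{J}_{2n}^2]$ estimate by this H\"older-type bound, your argument goes through verbatim.
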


\begin{lemma}[\cite{Fukuda2019b}]
\label{lem:symp-sp}
Let $\pmb\Gamma_{m}$ be the covariance matrix of an $m$-mode polynomially energy-bounded random Gaussian state of degree  $\beta<1/8$ (see Definition \ref{def:bound-rand}). For universal constants $C,c>0$ such that $\epsilon> Cn^{\beta-1}$, the symplectic eigenvalues $\{\nu_i\}_{i=1}^m$ of $\pmb\Gamma_m$ converge in probability to $\nu_{\tm}$, i.e.,
\begin{align}
\Mr{Pr}\left[ \sum_{i=1}^m\left(\nu_i^2-\nu_{\tm}^2\right)^2>\epsilon \right]\leq \exp\left[ -c\epsilon^2  n^{1-8\beta} \right],
\end{align}
where 
$\nu_\tm=\Tr[\wt{J}_{2n}({\bf z})]/(8n)$.
\end{lemma}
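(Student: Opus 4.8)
The plan is to realize $Q(\mathbf{O}_n):=\sum_{i=1}^m(\nu_i^2-\nu_{\tm}^2)^2$ as a smooth spectral function of the Haar-distributed matrix $\mathbf{O}_n\in\Mr{K}_n\cong\Mr{U}(n)$ and then invoke concentration of measure on the unitary group, exactly in the spirit of Lemma~\ref{lem:reg-spectra} but tracking a higher moment. First I would recall that the symplectic eigenvalues $\nu_i$ of $\pmb\Gamma_m$ are the moduli of the eigenvalues of $\mf{i}\pmb\Omega\pmb\Gamma_m$, which occur in pairs $\pm\nu_i$; hence for every $k$ the power sum $p_k:=\sum_{i=1}^m\nu_i^{2k}=\tfrac{(-1)^k}{2}\Tr[(\pmb\Omega\pmb\Gamma_m)^{2k}]$ is the trace of a fixed power of $\pmb\Omega\pmb\Gamma_m$. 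Expanding the square gives
\begin{align}
Q=p_2-2\nu_{\tm}^2\,p_1+m\,\nu_{\tm}^4,
\end{align}
so that $Q$ is a polynomial in the entries of $\pmb\Gamma_m=\tfrac12\pmb\Pi_{m,N}\mathbf{O}_n\wt{J}_n(\mathbf{z})\mathbf{O}_n^T\pmb\Pi_{m,N}$ and therefore a smooth function of $\mathbf{O}_n$ alone, since $\nu_{\tm}$ and $\wt{J}_n(\mathbf{z})$ depend only on the fixed squeezing data $\mathbf{z}$.

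Second, I would apply Levy's lemma on $\Mr{U}(n)$ (through the isomorphism $\Mr{K}_n\cong\Mr{U}(n)$), which states that an $L$-Lipschitz function $f$ with respect to the Hilbert--Schmidt metric satisfies $\Mr{Pr}[|f-\mb{E}f|>t]\le\exp(-c'\,n\,t^2/L^2)$ for a universal constant $c'>0$, the factor $n$ stemming from the Ricci-curvature lower bound of the unitary group \cite{Ledoux2005}. Writing $\Mr{Pr}[Q>\epsilon]\le\Mr{Pr}[Q-\mb{E}Q>\epsilon-\mb{E}Q]$, it then suffices to (i) bound the mean $\mb{E}Q$ and (ii) bound the Lipschitz constant $L$ of $Q$.

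For the mean I would perform the Haar (Weingarten) average. The first-order identity $\mb{E}[\mathbf{O}_n\wt{J}_n(\mathbf{z})\mathbf{O}_n^T]\propto\mb{I}$ yields $\mb{E}[\pmb\Gamma_m]=\nu_{\tm}\,\mb{I}_{2m}$, whence $\mb{E}p_1=m\nu_{\tm}^2$ and $\mb{E}p_2=m\nu_{\tm}^4$ at leading order; the subleading Weingarten corrections, together with the variances of $p_1$ and $p_2$, combine to give $\mb{E}Q=\ord{\beta-1}$. This is precisely the origin of the hypothesis $\epsilon>Cn^{\beta-1}$: for such $\epsilon$ one has $\mb{E}Q\le\epsilon/2$, so that $\Mr{Pr}[Q>\epsilon]\le\exp(-c'n(\epsilon/2)^2/L^2)$.

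The hard part, and the step that forces the stricter range $\beta<1/8$ (versus $\beta<1/4$ in Lemma~\ref{lem:reg-spectra}), is the Lipschitz estimate of the quartic moment $p_2=\tfrac12\Tr[(\pmb\Omega\pmb\Gamma_m)^4]$. Differentiating along a tangent direction and using $\NormSmall{\pmb\Gamma_m}{\infty}=\ord{\beta}$ and $\NormSmall{\wt{J}_n(\mathbf{z})}{\infty}=\ord{\beta}$, the degree-four structure produces three factors of $\pmb\Omega\pmb\Gamma_m$ and one of $\wt{J}_n(\mathbf{z})$, whereas the quadratic moment $p_1$ only produces two such factors; this is what doubles the $\beta$-dependence relative to the eigenvalue lemma. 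A naive operator-norm bound is too lossy here, as it carries a spurious factor growing with the mode number coming from the rank of $\pmb\Pi_{m,N}$; removing it to obtain the sharp constant $L=\ord{4\beta}$ is the genuine obstacle, and I expect it to require linearizing $p_2$ about the thermal point $\nu_{\tm}\mb{I}$ and feeding in the near-thermal typicality of $\pmb\Gamma_m$ already established in Lemma~\ref{lem:reg-spectra}, uniformly over the admissible squeezing vectors $\mathbf{z}$. Substituting $L=\ord{4\beta}$ into the concentration inequality then yields $\Mr{Pr}[Q>\epsilon]\le\exp(-c\,\epsilon^2\,n^{1-8\beta})$, and the constraint $\beta<1/8$ is exactly what keeps the exponent $n^{1-8\beta}$ growing, so that convergence in probability of the symplectic eigenvalues to $\nu_{\tm}$ follows.
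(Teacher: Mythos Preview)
Your overall strategy—express $Q$ as a trace polynomial in $\pmb\Omega\pmb\Gamma_m$, bound its Haar mean via Weingarten calculus, bound its Lipschitz constant, and then invoke concentration on the unitary group—is exactly the paper's route (after the purification $n\mapsto 2n$, so the function lives on $\Mr{U}(2n)$). The identification $2Q=\Tr\bigl[((\pmb\Omega\pmb\Gamma_m)^2+\nu_{\tm}^2\mb{I})^2\bigr]$ and the mean estimate $\mb{E}Q=\ord{\beta-1}$ are handled just as you describe.

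Where you go astray is the Lipschitz step. You call the naive operator-norm bound ``too lossy'' because of a factor coming from the rank of $\pmb\Pi_{m,N}$, and propose to cure it by linearizing about the thermal point and importing the typicality statement of Lemma~\ref{lem:reg-spectra}. But the rank of $\pmb\Pi_{m,N}$ is $2m$, and $m$ is \emph{fixed} while $n\to\infty$; the factor $\sqrt{2m}$ arising from $\NormSmall{\cdot}{1}\le\sqrt{2m}\,\NormSmall{\cdot}{2}$ is therefore a universal constant, not a growing one. The paper simply writes
\begin{align*}
\Mod{\mf{T}(U)-\mf{T}(V)}
&\lesssim\NormSmall{\wt{J}_{2n}}{\infty}^{3}\,\NormSmall{\pmb\Gamma_m-\wt{\pmb\Gamma}_m}{1}
\le\sqrt{2m}\,\NormSmall{\wt{J}_{2n}}{\infty}^{3}\,\NormSmall{\pmb\Gamma_m-\wt{\pmb\Gamma}_m}{2}
\end{align*}
and reuses the bound $\NormSmall{\pmb\Gamma_m-\wt{\pmb\Gamma}_m}{2}\le 2\,\NormSmall{\wt{J}_{2n}}{\infty}\,\NormSmall{U-V}{2}$ already obtained in the proof of Lemma~\ref{lem:reg-spectra}, giving $L=\ord{4\beta}$ directly. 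No linearization is needed, and invoking the \emph{probabilistic} conclusion of Lemma~\ref{lem:reg-spectra} would in any case be conceptually wrong here: the Lipschitz constant is a deterministic quantity that must hold for every pair $U,V\in\Mr{U}(2n)$, not just for typical ones.
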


Our main result lies in the following $\epsilon$-no-go theorem for the Gaussian extractable work from a polynomially energy-bounded random Gaussian state.

\begin{theorem}[typicality of the Gaussian extractable work]
\label{th:main-conc}
Let $\pmb\Gamma_{m}$ be the covariance matrix of an $m$-mode polynomially energy-bounded random Gaussian state of degree  $\beta<1/8$ resulting from performing a partial trace on a random $n$-mode state as in Definition \ref{def:bound-rand}. Then, for universal constants $\tilde{c}, C>0$ such that $\epsilon > \sqrt{2Cmn^{\beta-1}}$, the Gaussian extractable work $\mc{W}\left(\pmb\Gamma_m\right)$ satisfies
\begin{align}
\label{eq:thmain}
\Mr{Pr}\left[\mc{W}\left(\pmb\Gamma_m\right)>\epsilon \right]\leq \exp\left[-\tilde{c} \epsilon^4 n^{1-8\beta} \right].
\end{align}
\end{theorem}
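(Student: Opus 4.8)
The plan is to combine the two spectral concentration lemmas through the explicit formula for the Gaussian extractable work in Eq.~\eqref{eq:loc-gaus-work}. Recall that $\mc{W}(\pmb\Gamma_m)=\frac{1}{2}\left(\Tr[\pmb\Gamma_m]-\Mr{STr}[\pmb\Gamma_m]\right)=\frac{1}{2}\left(\sum_{i=1}^{2m}\lambda_i-2\sum_{i=1}^m\nu_i\right)$, since each symplectic eigenvalue $\nu_i$ contributes twice to the ordinary trace via the symplectic diagonalization. Because every $\nu_i\le\lambda$ for the corresponding ordered eigenvalues and $\mc{W}\ge 0$ always, the intuition is that if both the eigenvalues and the symplectic eigenvalues cluster tightly around the same value $\nu_\tm$, then $\Tr$ and $\Mr{STr}$ nearly coincide and the extractable work is forced to be small. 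The two lemmas supply exactly this clustering, so the main task is to turn the ``convergence in probability to $\nu_\tm$'' statements into a bound on the difference of sums.

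First I would rewrite the work as a sum of the deviations of each (symplectic) eigenvalue from the common value $\nu_\tm$. Writing $\lambda_i=\nu_\tm+\delta_i$ and $\nu_j=\nu_\tm+\eta_j$, one has $\mc{W}(\pmb\Gamma_m)=\frac{1}{2}\left(\sum_{i=1}^{2m}\delta_i-2\sum_{j=1}^m\eta_j\right)$. The plan is to control each group of deviations by a Cauchy--Schwarz step: $\left|\sum_i\delta_i\right|\le\sqrt{2m}\,\sqrt{\sum_i\delta_i^2}$ and likewise for the $\eta_j$. Lemma~\ref{lem:reg-spectra} bounds $\sum_i(\lambda_i-\nu_\tm)^2=\sum_i\delta_i^2$, while Lemma~\ref{lem:symp-sp} bounds $\sum_j(\nu_j^2-\nu_\tm^2)^2$; the latter is in terms of squares, so I would factor $\nu_j^2-\nu_\tm^2=(\nu_j-\nu_\tm)(\nu_j+\nu_\tm)$ and use the positivity of the symplectic eigenvalues (each $\nu_j\ge 1/2$) to lower-bound $\nu_j+\nu_\tm\ge\nu_\tm$, thereby converting the Lemma~\ref{lem:symp-sp} bound on $\sum_j(\nu_j^2-\nu_\tm^2)^2$ into a bound on $\sum_j(\nu_j-\nu_\tm)^2=\sum_j\eta_j^2$.

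The decisive step is then a union bound. The event $\{\mc{W}(\pmb\Gamma_m)>\epsilon\}$ is contained in the union of the two bad events where either $\sum_i\delta_i^2$ or $\sum_j\eta_j^2$ exceeds an appropriate threshold scaling like $\epsilon^2/m$ (coming from inverting the Cauchy--Schwarz inequalities so that the combined sums exceed $\epsilon$). Feeding these thresholds into Lemmas~\ref{lem:reg-spectra} and~\ref{lem:symp-sp} yields two exponential tail bounds of the form $\exp[-c\,(\epsilon^2/m)^2 n^{1-8\beta}]=\exp[-c\,\epsilon^4 m^{-2}n^{1-8\beta}]$; adding them and absorbing constants (and the factor $m^{-2}$, using $m\le n$) into a single universal constant $\tilde c$ gives the claimed $\exp[-\tilde c\,\epsilon^4 n^{1-8\beta}]$. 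The threshold condition $\epsilon>\sqrt{2Cmn^{\beta-1}}$ is precisely what guarantees that the induced thresholds on the squared sums satisfy the hypotheses $\epsilon'>Cn^{\beta-1}$ of the two lemmas, so I would verify this compatibility explicitly. Since Lemma~\ref{lem:symp-sp} requires the stricter $\beta<1/8$, the theorem inherits this restriction.

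The main obstacle I anticipate is not the probabilistic machinery, which is a routine union bound, but the bookkeeping that merges the two different concentration rates into one clean exponent. Lemma~\ref{lem:reg-spectra} concentrates at rate $n^{1-4\beta}$ in $\epsilon^2$, whereas Lemma~\ref{lem:symp-sp} concentrates at rate $n^{1-8\beta}$ in $\epsilon^2$ but for the \emph{squared} symplectic eigenvalues. Reconciling the quadratic-versus-linear dependence, ensuring the lower bound $\nu_j+\nu_\tm\ge\nu_\tm$ is legitimate (equivalently that $\nu_\tm$ stays bounded away from zero under the energy scaling), and tracking how the factor of $m$ from Cauchy--Schwarz degrades the final exponent together require care. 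The cleanest route is to split the target $\epsilon$ into two equal halves, one absorbed by each lemma, and then to take the weaker of the two resulting exponents; the bottleneck is Lemma~\ref{lem:symp-sp}, which both imposes $\beta<1/8$ and, after the $\nu_j^2\to\nu_j$ conversion, supplies the dominant $\epsilon^4 n^{1-8\beta}$ scaling seen in the statement.
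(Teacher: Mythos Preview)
Your proposal is correct and follows essentially the same route as the paper: rewrite $\mc{W}$ as a sum of deviations of eigenvalues and symplectic eigenvalues from $\nu_\tm$, use $\nu_j+\nu_\tm\ge 1$ (both $\nu_j$ and $\nu_\tm$ are $\ge 1/2$) to pass from $|\nu_j-\nu_\tm|$ to $|\nu_j^2-\nu_\tm^2|$, and then apply Cauchy--Schwarz. The one technical difference is that the paper does \emph{not} union-bound Lemmas~\ref{lem:reg-spectra} and~\ref{lem:symp-sp} as black boxes; instead it forms $\Delta_m(U):=T_m(U)+\mf{T}_m(U)$, observes that $\Delta_m$ inherits a Lipschitz constant $O(n^{4\beta})$ and mean $O(n^{\beta-1})$ from the \emph{proofs} of the two lemmas, applies concentration of measure once to $\Delta_m$, and then uses the single inequality $\mc{W}(\pmb\Gamma_m)\le\sqrt{m\,\Delta_m(U)}$. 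Your black-box union bound yields the same final exponent and is arguably more modular; the paper's version avoids the factor~$2$ from the union but requires reaching inside the lemmas for the Lipschitz and expectation estimates rather than using only their statements. One small correction: the factor $m^{-2}$ appearing in the exponent cannot be absorbed ``using $m\le n$'' (that would \emph{weaken} the bound to $n^{-1-8\beta}$); both your argument and the paper's simply treat $m$ as fixed and fold $m^{-2}$ into the universal constant~$\tilde c$.
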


As noted above, the energy of $\pmb\Gamma_{n}$ scales as $\ord{\beta +1}$, so the energy of $\pmb\Gamma_{m}$ scales as $\ord{\beta}$ after tracing over the $N=\Omega(n)$ environmental modes \cite{Note2}. The extractable work tolerance $\epsilon$ in Theorem \ref{th:main-conc}  thus scales as $\sqrt{\frac{m}{n^{1-\beta}}}$, which goes to zero in the limit $n\rightarrow\infty$ as $\beta<1/8$. Furthermore, the right-hand side of Eq. \eqref{eq:thmain} in Theorem \ref{th:main-conc} exponentially goes to zero in the limit $n\rightarrow\infty$ as  $\beta<1/8$. These two elements make Theorem \ref{th:main-conc} meaningful.

\medskip
{\it Proof of the theorem.---}Using the Gaussian purification argument (see Proposition 1 of the Supplemental Material \cite{Note3}), i.e., $n\mapsto 2n$, Eq. \eqref{eq:rand-m+n} becomes
\begin{align}
\label{eq:new-cov}
\pmb\Gamma_{m}:=\frac{1}{2} \pmb\Pi_{m,N+n}{\bf O}_{2n} \wt{J}_{2n}({\bf z}) {\bf O}_{2n}^T \pmb\Pi_{m,N+n},
\end{align}
where ${\bf O}_{2n}\in \Mr{K}_{2n}$ is a $4n\times 4n$ orthogonal symplectic matrix. Also, we have (see Refs. \cite{Note3, Fukuda2019b})
\begin{align}
\label{eq:new-cov-next}
{\bf O}_{2n}= P\begin{pmatrix}
U & 0\\
0 & U^*
\end{pmatrix}P^{-1},
\end{align}
where $
P=\frac{1}{\sqrt{2}}\begin{pmatrix}
\mb{I}_{2n} & \io \mb{I}_{2n}\\
\io \mb{I}_{2n} & \mb{I}_{2n}
\end{pmatrix}$ and $U$ is a $2n\times 2n$ random unitary matrix. We define the function $T_m:\Mr{U}(2n)\rightarrow\mb{R}$ of random unitary matrices as
\begin{align}
T_m(U):=\sum_{k=1}^{2m}\left(\lambda_k-\nu_{\tm}\right)^2  ,
\end{align}
where $\{\lambda_k\}_{k=1}^{2m}$ are the eigenvalues of $\pmb\Gamma_m\equiv \pmb\Gamma_m(U)$ as defined from Eqs.~\eqref{eq:new-cov} and \eqref{eq:new-cov-next} and $\nu_\tm$ is the average energy per mode of the $2n$-mode input pure state with covariance matrix $\frac{1}{2} {\bf O}_{2n} \wt{J}_{2n}({\bf z}) {\bf O}_{2n}^T $, that is, $\nu_\tm=\Tr[\wt{J}_{2n}({\bf z})]/(8n)$. By exploiting the concentration of measure, Lemma \ref{lem:reg-spectra} then gives us an exponentially small upper bound on $\Mr{Pr}\left[ T_m(U)  >\epsilon \right]$. Similarly, we define the function $\mf{T}_m:\Mr{U}(2n)\rightarrow\mb{R}$ as
\begin{align}
\mf{T}_m(U) := 2\sum_{k=1}^{m}\left(\nu_k^2 - \nu_{\tm}^2\right)^2 ,
\end{align}
where $\{\nu_k\}_{k=1}^m$ are the symplectic eigenvalues of $\pmb\Gamma_m (U)$, and use Lemma \ref{lem:symp-sp} to obtain an exponentially small upper bound on 
$\Mr{Pr}\left[ \mf{T}_m(U)  >\epsilon \right]$ (see also Ref. \cite{Fukuda2019b}). 
Thus, together, Lemmas \ref{lem:reg-spectra} and \ref{lem:symp-sp} imply that energy-bounded random Gaussian states are typically locally thermal with the same average energy in each mode (since both the symplectic and regular eigenspectra concentrate around a thermal spectrum).

This is the key to our proof of the near impossibility of Gaussian work extraction from energy-bounded random Gaussian states. Consider the function $\Delta_m:\Mr{U}(2n)\rightarrow\mb{R}$ as $\Delta_m(U):=T_m(U)+\mf{T}_m(U)$.  From Lemmas \ref{lem:reg-spectra} and \ref{lem:symp-sp} we know that both $T(U)$ and $\mf{T}(U)$ are Lipschitz continuous functions on $\Mr{U}(2n)$ \cite{Note3}. Then, for any two unitaries $U,V\in\Mr{U}(2n)$, we have
\begin{align*}
\Mod{\Delta(U)-\Delta(V)}&\leq \Mod{T(U)-T(V)}+\Mod{\mf{T}(U)-\mf{T}(V)}\\
&\leq \ord{4\beta}\Norm{U-V}{2}.
\end{align*}
Thus, $\Delta_m(U)$ is a Lipschitz continuous function on $\Mr{U}(2n)$ with a Lipschitz constant given by $\theta \, n^{4\beta}$, where $\theta$ is a universal constant. Furthermore,  we have
\begin{align*}
\mb{E}_U\Delta(U)&= \mb{E}_UT(U)+\mb{E}_U\mf{T}(U)\\
&=\ord{\beta-1}\leq C n^{\beta-1},
\end{align*}
where $C$ is a universal constant \cite{Note3}. Next, we can exploit the concentration of measure for $\Delta_m(U)$ in a manner similar to that followed for $T_m(U)$ and $\mf{T}_m(U)$. For universal constants $C,c'>0$ such that $\delta>2Cn^{\beta-1}$, we have
\begin{align}
\label{eq:typ-delta}
\Mr{Pr}\left[\Delta(U)\geq \delta\right]&\leq \Mr{Pr}\left[\Delta(U)> \frac{\delta}{2}+\mb{E}_{U}\Delta(U)\right]\nonumber\\
&\leq \exp\left[-\frac{\delta^2 n}{48\theta^2 n^{8\beta}}\right]\nonumber\\
&\leq \exp\left[-c'\delta^2 n^{1-8\beta}\right],
\end{align}
where $c'$ is a universal constant. Then, we express the Gaussian extractable work $\mc{W}\left(\pmb\Gamma_m\right)$ as a function of $\Delta_m(U)$. Noting that $\mc{W}\left(\pmb\Gamma_m\right)= \left|\mc{W}\left(\pmb\Gamma_m\right)\right|$, we have 
\begin{align*}
\mc{W}\left(\pmb\Gamma_m\right)
&=\left|\frac{1}{2}\sum_{k=1}^{2m}(\lambda_k-\nu_{\tm})+\sum_{k=1}^{m}(\nu_{\tm}-\nu_k)\right|\\
&\leq \frac{1}{2}\left(\sum_{k=1}^{2m}\left|\lambda_k-\nu_{\tm}\right|+2\sum_{k=1}^{m}\left|\nu_k^2-\nu_{\tm}^2\right|\right)\\
&\leq \sqrt{m}\sqrt{\sum_{k=1}^{2m}\left|\lambda_k-\nu_{\tm}\right|^2+2\sum_{k=1}^{m}\left|\nu_k^2-\nu_{\tm}^2\right|^2}\\
&=  \sqrt{m \, \Delta_m(U)},
\end{align*}
where the first inequality follows from the triangle inequality and the fact that $\nu_{\tm}+\nu_k\geq 1$. The second inequality follows from the inequality $(\sum_{i=1}^n x_i)^2 \leq n\sum_{i=1}^n x_i^2$. Letting $\delta>2Cn^{\beta-1}$ and using Eq. \eqref{eq:typ-delta}, we have
\begin{align*}
\Mr{Pr}\left[\mc{W}\left(\pmb\Gamma_m\right)\leq  \sqrt{m\delta}\right]
&\geq  \Mr{Pr}\left[\Delta_m(U) \leq \delta \right]\\
&\geq 1-  \exp\left[-c'\delta^2 n^{1-8\beta}\right].
\end{align*}
Furthermore, letting $\epsilon=\sqrt{m\delta}>\sqrt{2Cm n^{\beta-1}}$, we have
\begin{align*}
\Mr{Pr}\left[\mc{W}\left(\pmb\Gamma_m\right)>\epsilon\right]
&\leq \exp\left[-\wt{c}\epsilon^4 n^{1-8\beta}\right],
\end{align*}
where $\wt{c}$ is a universal constant, which concludes our proof.

\medskip
{\it Conclusion.---}We have established the near impossibility---in a strong sense---of extracting work from (zero-mean) polynomially energy-bounded random multimode Gaussian states using Gaussian unitaries. Qualitatively, this follows from the fact that these states are typically locally thermal as a consequence of the ``concentration of measure" phenomenon, so one cannot extract work from such states.
This is a probabilistic statement in the sense that there remains an exponentially small probability that it does not hold. In this regard, our $\epsilon$-no-go theorem slightly contrasts with the well-known Gaussian no-go theorems, e.g., for Gaussian universal quantum computation \cite{Lloyd1999}, for the distillation of entanglement from Gaussian states using Gaussian local  operations and classical communication \cite{Eisert2002,Fiurasek2002,Giedke2002}, or for Gaussian quantum error correction \cite{Niset2009}.

Our findings reveal a fundamental limitation on the processing of Gaussian states using Gaussian operations and show that harnessing quantum thermodynamical processes is typically impossible in the Gaussian regime. This limitation even goes beyond the extractable work as a similar $\epsilon$-no-go theorem can be proven for the single-mode relative entropy of activity (an alternative measure of the distance from a Gaussian thermal state defined in Ref. \cite{Singh2019}) of random Gaussian states; see Ref. \cite{us3}.

Keeping in mind that quantum optical setups are among the leading platforms for experimental quantum thermodynamics, our results point to an essential requirement to consider non-Gaussian (or even perhaps nontypical Gaussian) components in quantum thermodynamics. A natural question that arises in this regard is the following: Which nontypical or non-Gaussian states can get around the $\epsilon$-no-go theorems presented here and hence be useful for work extraction? It would be very interesting to address this question, which, in turn, could open up exciting developments in 
quantum thermodynamics with non-Gaussian resources, a topic that has hardly been explored to date.

On a final note, our results are reminiscent to the de~Finetti theorem for quantum states that are invariant under orthogonal symplectic transformations \cite{Leverrier2009}. Such a de~Finetti theorem states that if we perform a partial trace on a state that obeys this invariance, the resulting state approaches a mixture of products of (independent and identically distributed) thermal states. However, we note that the random states considered here are not, in general, invariant under orthogonal symplectic transformations; yet we are able to show that they concentrate around a product of thermal states with the same mean photon number. In fact, we have a stronger bound here as the error probability is exponentially small, while it is polynomially small in the de~Finetti theorem of Ref.~\cite{Leverrier2009}. On the other hand, our result is only concerned with Gaussian states, while the de~Finetti theorem of Ref.~\cite{Leverrier2009} holds for any state with the right invariance. This suggests that it would be interesting to explore the relationship between de~Finetti theorems and the ``concentration of measure" phenomenon.

\medskip
\begin{acknowledgments}
U.S. and J.K.K. acknowledge support from the Polish National Science Center (NCN; Grant No. 2019/35/B/ST2/01896). N.J.C. acknowledges support from Fonds de la Recherche Scientifique – FNRS under Grant No. T.0224.18 and from the European Union under project ShoQC within ERA-NET Cofund in Quantum Technologies (QuantERA) program.
\end{acknowledgments}

\medskip
\bibliography{generic-work}





\clearpage
\newpage
\begin{widetext}

\noindent  {\bf Supplemental Material: Gaussian work extraction from random Gaussian states is nearly impossible}

\medskip
\noindent
In Section \ref{sec:GauPur}, we use a Gaussian purification argument to prove that the total state can be taken pure with no loss of generality in the argument leading to the definition of the set $\mc{L}_{m,E}$.
In Sections \ref{sec:App-Lemma1} and \ref{sec:App-Lemma2}, we state and prove Lemmas \ref{lem:reg-spectra} and \ref{lem:symp-sp}, respectively, which are used in order to prove Theorem \ref{th:main-conc} in the main text.
   In Appendix \ref{sec:lip}, we detail the mathematical tools used to prove the results presented here. 
Appendix \ref{sec:Wein} elaborates on the method to compute averages over Haar distributed unitaries following Weingarten calculus.


\section{Gaussian purification}
\label{sec:GauPur}

We have the following proposition on the structure of the set $\mc{L}_{m,E}$.

\begin{proposition}[Gaussian purification]
\label{prop:puri}
For any covariance matrix $\pmb\Gamma_m\in \mc{L}_{m,E}$, there exists a $2m$-mode covariance matrix $\pmb\Gamma$ corresponding to pure Gaussian state with $\Mr{Tr}[\pmb\Gamma]\leq 4E$ such that $\pmb\Gamma_m=\pmb\Pi_{m,m}\pmb\Gamma\pmb\Pi_{m,m}$, where $\pmb\Pi_{m,m}=\diag{\overbrace{1,\cdots,1}^{m},\overbrace{0,\cdots,0}^{m}}\oplus \diag{\overbrace{1,\cdots,1}^{m},\overbrace{0,\cdots,0}^{m}}$.
\end{proposition}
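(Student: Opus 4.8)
The plan is to construct the purification explicitly from the Williamson normal form of $\pmb\Gamma_m$ and then to control separately the energy contributed by the ancilla modes. First I would invoke Williamson's theorem to write $\pmb\Gamma_m=\mc{S}\,\mc{D}\,\mc{S}^T$, where $\mc{S}\in\mathrm{Sp}(2m,\mb{R})$ and $\mc{D}=\diag{\nu_1,\dots,\nu_m}\oplus\diag{\nu_1,\dots,\nu_m}$ collects the symplectic eigenvalues $\nu_1,\dots,\nu_m\geq 1/2$ of $\pmb\Gamma_m$. The quantity to keep track of throughout is $\Tr[\mc{D}]=2\sum_{i=1}^m\nu_i=\Mr{STr}[\pmb\Gamma_m]$, the symplectic trace of $\pmb\Gamma_m$.

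Next I would purify the thermal core $\mc{D}$ in the standard way: append $m$ ancilla modes, one per system mode, and pair system mode $i$ with its ancilla into a two-mode squeezed vacuum whose squeezing is fixed so that each single-mode reduction equals $\nu_i\mb{I}_2$. Collecting these pairs (in the global $qqpp$ ordering, which is just a bookkeeping permutation) yields a $2m$-mode covariance matrix $\pmb\Gamma^{(0)}$ that is pure---all its symplectic eigenvalues equal $1/2$---whose reduction onto the first $m$ modes is $\mc{D}$ and, by the symmetry of the two-mode squeezed vacuum, whose reduction onto the ancilla modes is also $\mc{D}$. I would then apply the symplectic map $\mc{S}\oplus\mb{I}_{2m}$, acting as $\mc{S}$ on the system block and trivially on the ancillas, to define $\pmb\Gamma:=(\mc{S}\oplus\mb{I}_{2m})\,\pmb\Gamma^{(0)}\,(\mc{S}\oplus\mb{I}_{2m})^T$. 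Since conjugation by a symplectic matrix preserves the symplectic spectrum, $\pmb\Gamma$ remains pure; its system reduction becomes $\mc{S}\,\mc{D}\,\mc{S}^T=\pmb\Gamma_m$, i.e.\ $\pmb\Pi_{m,m}\,\pmb\Gamma\,\pmb\Pi_{m,m}=\pmb\Gamma_m$, while the ancilla block is left untouched and still equals $\mc{D}$.

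The main obstacle is the energy bound, and this is where the structure of the problem really enters. Writing the total energy as a sum of the two reductions gives
\begin{align*}
\Tr[\pmb\Gamma]=\Tr[\pmb\Gamma_m]+\Tr[\mc{D}]=\Tr[\pmb\Gamma_m]+\Mr{STr}[\pmb\Gamma_m].
\end{align*}
The crucial point is that the ancilla energy equals the symplectic trace of $\pmb\Gamma_m$, which I would bound using the nonnegativity of the Gaussian extractable work in Eq.~\eqref{eq:loc-gaus-work}: the relation $\mc{W}(\pmb\Gamma_m)=\frac{1}{2}(\Tr[\pmb\Gamma_m]-\Mr{STr}[\pmb\Gamma_m])\geq 0$ forces $\Mr{STr}[\pmb\Gamma_m]\leq\Tr[\pmb\Gamma_m]$. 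Combining this with the hypothesis $\Tr[\pmb\Gamma_m]\leq 2E$ yields $\Tr[\pmb\Gamma]\leq 2\Tr[\pmb\Gamma_m]\leq 4E$, which is exactly the claimed bound. The only remaining care is the ordering bookkeeping needed to match the explicit form of $\pmb\Pi_{m,m}$, which is routine.
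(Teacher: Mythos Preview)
Your proposal is correct and follows essentially the same approach as the paper's proof: Williamson decomposition, mode-by-mode purification via two-mode squeezed vacua, then applying $\mc{S}\oplus\mb{I}$ on the system block. The only cosmetic difference is in the energy step: you invoke the nonnegativity of $\mc{W}(\pmb\Gamma_m)$ from Eq.~\eqref{eq:loc-gaus-work} to get $\Mr{STr}[\pmb\Gamma_m]\leq\Tr[\pmb\Gamma_m]$, whereas the paper phrases the same inequality as $\Tr[\pmb\Gamma_m]=\Tr[\mc{S}\pmb\Gamma_{0m}\mc{S}^T]\geq\min_{\mc{S}}\Tr[\mc{S}\pmb\Gamma_{0m}\mc{S}^T]=\Tr[\pmb\Gamma_{0m}]$---both are the same fact from Ref.~\cite{Singh2019}.
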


\begin{proof}
Given an $m$-mode  covariance matrix $\pmb\Gamma_m$ of system $A$, using Williamson's theorem, we can write
\begin{align}
\label{eq:arb-cov}
\pmb\Gamma_m=\mc{S}\pmb\Gamma_{0m} \mc{S}^T,
\end{align}
where $\mc{S}$ is a symplectic matrix and $\pmb\Gamma_{0m}=\diag{\nu_1,\cdots, \nu_m}\oplus \diag{\nu_1,\cdots, \nu_m}$ with $\nu_i$ being the symplectic eigenvalues. $\pmb\Gamma_{0m}$ is a collection of $m$ single mode thermal states and the mean photon number of the $i$th thermal state is given by $(2\nu_i-1)/2$. It is known that a single mode thermal state can be purified using two mode squeezed state. In particular, a two mode squeezer on systems $A$ and $R$ is described as a symplectic transformation $\mc{S}_{TMS}$ given by
\begin{align}
\mc{S}_{TMS}=\begin{pmatrix}
\cosh r_i \mb{I} & \sinh r_i \sigma_z \\ \sinh r_i \sigma_z & \cosh r_i\mb{I}
\end{pmatrix},
\end{align}
which acts linearly on the two mode quadrature operators ${\bf x}_i= (q_i^A,p_i^A,q_i^R, p_i^R)$. The covariance matrix for two mode vacuum state is given by $\frac{1}{2}(\mb{I}\oplus \mb{I})$, where $\mb{I}$ is a $2\times 2$ identity matrix. Then the two mode squeezed vacuum state is given by
\begin{align}
\pmb\Gamma_{TMS}&=\frac{1}{2}\mc{S}_{TMS}\mc{S}_{TMS}^T\nonumber\\
&=\frac{1}{2}\begin{pmatrix}
\cosh r_i \mb{I} & \sinh r_i \sigma_z \\ \sinh r_i \sigma_z & \cosh r_i \mb{I}
\end{pmatrix}\begin{pmatrix}
\cosh r_i \mb{I} & \sinh r_i \sigma_z \\ \sinh r_i \sigma_z & \cosh r_i \mb{I}
\end{pmatrix}\nonumber\\
&=\frac{1}{2}\begin{pmatrix}
\cosh 2r_i \mb{I} & \sinh 2r_i \sigma_z \\ \sinh 2r_i \sigma_z & \cosh 2r_i\mb{I}
\end{pmatrix}\nonumber\\
&=\begin{pmatrix}
\nu_i \mb{I} & \sqrt{\nu_i^2-1/4} \sigma_z \\ \sqrt{\nu_i^2-1/4} \sigma_z & \nu_i\mb{I}
\end{pmatrix},
\end{align}
where $\cosh 2r_i = 2\nu_i$. We see that indeed removing one mode (the $R$ mode) gives us a thermal state with mean photon number $(2\nu_i-1)/2$. Similarly, we can write purification for $m$ modes. In particular, $\wt{\pmb{\Gamma}}_{2m}$ is a purification of $\pmb\Gamma_{0m}$, where
\begin{align}
\wt{\pmb{\Gamma}}_{2m} &=\begin{pmatrix}
\pmb{\Gamma}_{0m} & V \\ V & \pmb{\Gamma}_{0m}
\end{pmatrix},
\end{align}
 $V=\oplus_{i=1}^n \sqrt{\nu_i^2-1/4}\sigma_z$, and the order of quadrature operators is given by $(q_1^A, p_1^A\cdots, q_m^A, p_m^A,  q_1^R, p_1^R\cdots, q_m^R,p_m^R)$. Thus, we have
 \begin{align}
\pmb{\Gamma}_{2m}'&:=(\mc{S}\oplus\mb{I}_m)\wt{\pmb{\Gamma}}_{2m} (\mc{S}\oplus\mb{I}_m)^T\nonumber\\
 &=\begin{pmatrix}
\mc{S}\pmb{\Gamma}_{0m} \mc{S}^T & \mc{S}V \\ V\mc{S}^T & \pmb{\Gamma}_{0m}
\end{pmatrix}\nonumber\\
&=\begin{pmatrix}
\pmb{\Gamma}_m & \mc{S}V \\ V\mc{S}^T & \pmb{\Gamma}_{0m}
\end{pmatrix},
\end{align}
where $\mc{S}$ is the same as in Eq. \eqref{eq:arb-cov}.  For consistency of the notation, we further need to perform a permutation $P_{\pi}$ on $2m$ indices such that 
\begin{align}
&P_{\pi}(q_1^A, p_1^A\cdots, q_m^A, p_m^A, q_1^R, p_1^R\cdots, q_m^R,p_m^R)^T=(q_1^A,\cdots, q_m^A, q_1^R\cdots, q_m^R, p_1^A,\cdots, p_m^A, p_1^R, \cdots, p_m^R)^T.
\end{align}
Thus, the desired purification is given by $\pmb{\Gamma}_{2m}=P_{\pi} \pmb{\Gamma}_{2m}'P_{\pi}^{-1}$ as $\pmb{\Gamma}_{m}=\pmb\Pi_{m,m}\pmb{\Gamma}_{2m}\pmb\Pi_{m,m}$. Now let $\pmb\Gamma_m\in \mc{L}_{m,E}$, i.e., $\Mr{Tr}[\pmb\Gamma_m]\leq 2E$. The energy corresponding to covariance matrix $\pmb{\Gamma}_{2m}$ is given by
\begin{align}
\frac{1}{2}\Mr{Tr}[\pmb\Gamma_{2m}]&=\frac{1}{2}\Mr{Tr}[\pmb\Gamma_{2m}']\nonumber\\
&=\frac{1}{2}\Mr{Tr}[\pmb\Gamma_{m}]+\frac{1}{2}\Mr{Tr}[\pmb\Gamma_{0m}]\nonumber\\
&\leq \Mr{Tr}[\pmb\Gamma_{m}]\nonumber\\
&\leq 2E,
\end{align}
where we used the fact that $\Mr{Tr}[\pmb\Gamma_{m}]=\Mr{Tr}[\mc{S}\pmb\Gamma_{0m}\mc{S}^T] \geq \min_{\mc{S}\in\Sp}\Mr{Tr}[\mc{S} \pmb\Gamma_{0m}\mc{S}^T]= \Mr{Tr}[\pmb\Gamma_{0m}]$ \cite{Singh2019}. This completes the proof of the proposition.
\end{proof}



\section{Proof of Lemma \ref{lem:reg-spectra}}
\label{sec:App-Lemma1}

Here, we give a proof of Lemma \ref{lem:reg-spectra}, which establishes the typicality of the eigenspectra as needed to prove our main result on the Gaussian extractable work, Theorem \ref{th:main-conc} of the main text.
First, let us recall the physical procedure to sample a $n$-mode energy-constrained random Gaussian pure state. We sample ${\bf z}$ satisfying the energy constraint, yielding a squeezed vacuum state $\ket{\Psi({\bf z})}=\ket{\psi_{z_1}}\otimes\cdots\otimes\ket{\psi_{z_n}}$ with covariance matrix $J({\bf z})\oplus J^{-1}({\bf z})$, and then we  apply a random ${\bf O}\in\Mr{K}_n$ sampled from the invariant measure on $\Mr{K}_n$. This is done via the isomorphism $F:\Mr{U}(n)\mapsto \Mr{K}_n$ defined as
\begin{align}
\label{eq:o-def}
{\bf O}\equiv F(U):= P\begin{pmatrix}
U & 0\\
0 & U^*
\end{pmatrix}P^{-1},
\end{align}
where $
P=\frac{1}{\sqrt{2}}\begin{pmatrix}
\mb{I}_n & \io \mb{I}_n\\
\io \mb{I}_n & \mb{I}_n
\end{pmatrix}$ and $P^{\dagger}=P^{-1}$. Using the Gaussian purification argument, we actually replace $n$ with $2n$ in the above. Then, the zero-mean random $m$-mode pure

\begin{align}
\label{eq:new-cov6}
\pmb\Gamma_{m}:=\frac{1}{2} \pmb\Pi_{m,N+n}{\bf O}_{2n} \wt{J}_{2n}({\bf z}) {\bf O}_{2n}^T \pmb\Pi_{m,N+n},
\end{align}

\noindent
{\it Restatement of Lemma \ref{lem:reg-spectra}:--}
Let $\pmb\Gamma_{m}$ be the covariance matrix as in Eq.  \eqref{eq:new-cov6}.  Further, assume that $4\beta<1$. For universal constants $\gamma,\wt{\gamma}>0$ such that $\epsilon > 2\gamma n^{\beta-1}$, the eigenvalues $\{\lambda_i\}_{i=1}^{2m}$ of $\pmb\Gamma_m$ converge in probability to $\nu_{\tm}$, i.e.,
\begin{align}
\Mr{Pr}\left[ \sum_{i=1}^{2m}\left(\lambda_i-\nu_{\tm}\right)^2>\epsilon \right]\leq \exp\left[ -\wt{\gamma} \epsilon^2 n^{1-4\beta} \right],
\end{align}
where $\nu_\tm$ is the average energy per mode of the $2n$-mode input pure state, i.e., $\nu_\tm=\Tr[\wt{J}_{2n}({\bf z})]/(8n)$.

\begin{proof}
Let us consider a function $T_m:\Mr{U}(2n)\rightarrow\mb{R}$ of random unitary matrices defined as
\begin{align}
T_m(U)&:=\Mr{Tr}\left[\left(\pmb\Gamma_m-\nu_{\tm}\mb{I}_{2m}\right)^2\right],
\end{align}
where $\pmb\Gamma_m\equiv \pmb\Gamma_m(U)$ is defined by Eq. \eqref{eq:new-cov6} and $\nu_\tm$ is the average energy of the $2n$-mode input pure state with covariance matrix $\frac{1}{2} {\bf O}_{2n} \wt{J}_{2n}({\bf z}) {\bf O}_{2n}^T $. 
Thus we have $\nu_\tm=\Tr[\wt{J}_{2n}({\bf z})]/(8n)$. Also, note that $\nu_\tm$ is uniformly bounded in $n$ by definition of $\wt{J}_{2n}({\bf z}) = J_{2n}({\bf z})\oplus J^{-1}_{2n}({\bf z})$ with $\Norm{\wt{J}_{2n}({\bf z})}{\infty}=O\left(n^{\beta}\right)$.  Since $\pmb\Gamma_m$ is a real symmetric matrix, it can be diagonalized by an orthogonal matrix and we have
\begin{align}
T_m(U)=\sum_{k=1}^{2m}\left(\lambda_k-\nu_{\tm}\right)^2.
\end{align}

Thus, the lemma provides us with an upper bound to the probability $\Mr{Pr}\left[ T_m(U) >\epsilon \right]$. The proof follows from the concentration of measure phenomenon as we show below. First, we compute the average value of the function $T_m(U)$. By definition of $\pmb\Gamma_m$, we have
\begin{align}
2\pmb\Gamma_m=\frac{1}{2}\begin{pmatrix}
\pmb\Pi & \io\pmb\Pi\\
\io\pmb\Pi & \pmb\Pi
\end{pmatrix}\mc{A}(U)
\begin{pmatrix}
\pmb\Pi & \io\pmb\Pi\\
\io\pmb\Pi & \pmb\Pi
\end{pmatrix},
\end{align}
where
\begin{align}
\mc{A}(U)=
\begin{pmatrix}
U A U^T & -\io UBU^\dagger\\
-\io U^*B U^T & -U^* A U^{\dagger}
\end{pmatrix}
\end{align}
with $A=(J_{2n}({\bf z})-J_{2n}^{-1}({\bf z}))/2$ and $B=(J_{2n}({\bf z})+J_{2n}^{-1}({\bf z}))/2$. From Remark \ref{rem:wein} in appendix \ref{sec:Wein}, we have $\mb{E}_{\Mr{U}} U A U^T=0$. Further, it is easy to see that
\begin{align}
&\mb{E}_{\Mr{U}} UBU^\dagger=\Mr{Tr}[B]\frac{\mb{I}}{2n}.
\end{align}
Now, using  $\Norm{B}{1} =\Mr{Tr}[B]=4n\nu_{\tm}$, we have $
\mb{E}_{\Mr{U}}\pmb\Gamma_m=\nu_{\tm}\begin{pmatrix}
\pmb\Pi & 0\\
0 & \pmb\Pi
\end{pmatrix}$. Thus,
\begin{align}
\label{eq:av1}
\Mr{Tr}\left[\mb{E}_{\Mr{U}}\pmb\Gamma_m\right]=2m\nu_{\tm}.
\end{align}
 Moreover,
\begin{align*}
4\pmb\Gamma^2_m=\frac{\io}{2}\begin{pmatrix}
\pmb\Pi & \io\pmb\Pi\\
\io\pmb\Pi & \pmb\Pi
\end{pmatrix}
\begin{pmatrix}
\mc{Q}(U) & -\mc{B}(U)\\
-\mc{B}^*(U) & \mc{Q}^*(U)
\end{pmatrix}
\begin{pmatrix}
\pmb\Pi & \io\pmb\Pi\\
\io\pmb\Pi & \pmb\Pi
\end{pmatrix},
\end{align*}
where
\begin{align*}
&\mc{Q}(U)=-\io U BU^\dagger\pmb\Pi U  A U^T-\io U A U^T\pmb\Pi U^* BU^T;\\
&\mc{B}(U)= U BU^\dagger\pmb\Pi U B U^\dagger+ U A U^T\pmb\Pi U^* AU^\dagger.
\end{align*}
Using Remark \ref{rem:wein} of Appendix \ref{sec:Wein}, we have $\mb{E}_{\Mr{U}}\mc{Q}(U)=0$. Thus,
\begin{align}
4\mb{E}_{\Mr{U}}\pmb\Gamma^2_m=\begin{pmatrix}
\pmb\Pi \mb{E}_{\Mr{U}}\mc{B}(U)\pmb\Pi& 0\\
0 & \pmb\Pi\mb{E}_{\Mr{U}}\mc{B}(U)\pmb\Pi
\end{pmatrix}.
\end{align}
We now compute the $\mb{E}_{\Mr{U}}\mc{B}(U)$. We have 
\begin{align*}
\mc{B}(U)= U BU^\dagger\pmb\Pi U B U^\dagger+ U A U^T\pmb\Pi U^* AU^\dagger.
\end{align*}

We first compute $\mb{E}_{\Mr{U}}U BU^\dagger\pmb\Pi U B U^\dagger$ as follows.
\begin{align*}
&\mb{E}_{\Mr{U}} U BU^\dagger\pmb\Pi U B U^\dagger\\
 &= \mb{E}_{\Mr{U}} \sum_{i_1,k_1,i_2,k_2,j_1,l_1,j_2,l_2} U_{i_1 k_1} B_{k_1 l_1} U^*_{j_1 l_1} \pmb\Pi _{j_1 i_2} U_{i_2 k_2} B_{k_2 l_2} U^*_{j_2l_2}\op{i_1}{j_2}\\
&= \sum_{i_1,k_1,i_2,k_2,j_1,l_1,j_2,l_2}  B_{k_1 l_1}  \pmb\Pi _{j_1 i_2}  B_{k_2 l_2}  \sum_{\alpha,\beta\in S_2}\prod_{x=1}^2\delta_{i_xj_{\alpha(x)}}\prod_{y=1}^2\delta_{k_y l_{\beta(y)}}\Mr{Wg}(2n,\alpha^{-1}\beta)\op{i_1}{j_2}\\
&= \sum_{i_1,k_1,i_2,k_2,j_1,l_1,j_2,l_2}  B_{k_1 l_1}  \pmb\Pi _{j_1 i_2}  B_{k_2 l_2}  \op{i_1}{j_2}\left[\delta_{i_1j_1}\delta_{i_2j_2}\delta_{k_1 l_1} \delta_{k_2 l_2}\Mr{Wg}(2n,(1)(2))+\delta_{i_1j_1}\delta_{i_2j_2}\delta_{k_1 l_2} \delta_{k_2 l_1}\Mr{Wg}(2n,(12)) \right.\\
&\left. + \delta_{i_1j_2}\delta_{i_2j_1}\delta_{k_1 l_1} \delta_{k_2 l_2}\Mr{Wg}(2n,(12)) + \delta_{i_1j_2}\delta_{i_2j_1}\delta_{k_1 l_2} \delta_{k_2 l_1}\Mr{Wg}(2n,(1)(2))\right]\\
&= \sum_{i_1,k_1,i_2,k_2}  \left[ B_{k_1 k_1}  \pmb\Pi _{i_1 i_2}  B_{k_2 k_2}  \op{i_1}{i_2}\Mr{Wg}(2n,(1)(2))+ B_{k_1 k_2}  \pmb\Pi _{i_1 i_2}  B_{k_2 k_1}  \op{i_1}{i_2}\Mr{Wg}(2n,(12)) \right.\\
&\left. + B_{k_1 k_1}  \pmb\Pi _{i_2 i_2}  B_{k_2 k_2}  \op{i_1}{i_1}\Mr{Wg}(2n,(12)) + B_{k_1 k_2}  \pmb\Pi _{i_2 i_2}  B_{k_2 k_1}  \op{i_1}{i_1} \Mr{Wg}(2n,(1)(2))\right]\\
&=   \left(\Mr{Tr}[B]\right)^2  \pmb\Pi \Mr{Wg}(2n,(1)(2))+ \Mr{Tr}[B^2]  \pmb\Pi \Mr{Wg}(2n,(12))+ \left(\Mr{Tr}[B]\right)^2 \Mr{Tr}[\pmb\Pi ] \mb{I}_{2n} \Mr{Wg}(2n,(12)) + \Mr{Tr}[B^2] \Mr{Tr}[\pmb\Pi ] \mb{I}_{2n}  \Mr{Wg}(2n,(1)(2))\\
&=  \left(\Mr{Tr}[B]\right)^2\left(  \pmb\Pi \Mr{Wg}(2n,(1)(2))+ m\mb{I}_{2n} \Mr{Wg}(2n,(12)) \right)+ \Mr{Tr}[B^2] \left(  \pmb\Pi \Mr{Wg}(2n,(12)) + m\mb{I}_{2n}  \Mr{Wg}(2n,(1)(2))  \right)\\
&=  \left(\Mr{Tr}[B]\right)^2  \frac{ 2n \pmb\Pi - m\mb{I}_{2n}  }{2n(4n^2-1)} + \Mr{Tr}[B^2] \frac{ 2 mn\mb{I}_{2n} - \pmb\Pi  }{2n(4n^2-1)}.
\end{align*}
Now we compute $\mb{E}_{\Mr{U}}U A U^T\pmb\Pi U^* AU^\dagger$ as follows.
\begin{align*}
&\mb{E}_{\Mr{U}} U A U^T\pmb\Pi U^* AU^\dagger\\
 &= \mb{E}_{\Mr{U}} \sum_{i_1,k_1,i_2,k_2,j_1,l_1,j_2,l_2} U_{i_1 k_1} A_{k_1 k_2} U_{i_2 k_2} \pmb\Pi _{i_2 j_1} U^*_{j_1l_1} A_{l_1 l_2} U^*_{j_2l_2}\op{i_1}{j_2}\\
&= \sum_{i_1,k_1,i_2,k_2,j_1,l_1,j_2,l_2}  A_{k_1 k_2}  \pmb\Pi _{i_2 j_1}  A_{l_1 l_2}  \sum_{\alpha,\beta\in S_2}\prod_{x=1}^2\delta_{i_xj_{\alpha(x)}}\prod_{y=1}^2\delta_{k_y l_{\beta(y)}}\Mr{Wg}(2n,\alpha^{-1}\beta)\op{i_1}{j_2}\\
&= \sum_{i_1,k_1,i_2,k_2,j_1,l_1,j_2,l_2}  A_{k_1 k_2}  \pmb\Pi _{i_2 j_1}  A_{l_1 l_2}  \op{i_1}{j_2}\left[\delta_{i_1j_1}\delta_{i_2j_2}\delta_{k_1 l_1} \delta_{k_2 l_2}\Mr{Wg}(2n,(1)(2))+\delta_{i_1j_1}\delta_{i_2j_2}\delta_{k_1 l_2} \delta_{k_2 l_1}\Mr{Wg}(2n,(12)) \right.\\
&\left. + \delta_{i_1j_2}\delta_{i_2j_1}\delta_{k_1 l_1} \delta_{k_2 l_2}\Mr{Wg}(2n,(12)) + \delta_{i_1j_2}\delta_{i_2j_1}\delta_{k_1 l_2} \delta_{k_2 l_1}\Mr{Wg}(2n,(1)(2))\right]\\
&= \sum_{i_1,k_1,i_2,k_2}  \left[  A_{k_1 k_2}  \pmb\Pi _{i_2 i_1}  A_{k_1 k_2}  \op{i_1}{i_2} \Mr{Wg}(2n,(1)(2))+ A_{k_1 k_2}  \pmb\Pi _{i_2 i_1}  A_{k_2 k_1}  \op{i_1}{i_2}\Mr{Wg}(2n,(12)) \right.\\
&\left. + A_{k_1 k_2}  \pmb\Pi _{i_2 i_2}  A_{k_1 k_2}  \op{i_1}{i_1}\Mr{Wg}(2n,(12)) + A_{k_1 k_2}  \pmb\Pi _{i_2 i_2}  A_{k_2 k_1}  \op{i_1}{i_1} \Mr{Wg}(2n,(1)(2))\right]\\
&=  \Mr{Tr}[A^2]\left(\pmb\Pi \Mr{Wg}(2n,(1)(2)) + \pmb\Pi \Mr{Wg}(2n,(12)) +\Mr{Wg}(2n,(12))  m\mb{I} +\Mr{Wg}(2n,(1)(2))  m\mb{I} \right)  \\
&=  \Mr{Tr}[A^2]\frac{1 }{2n(2n+1)} \left(\pmb\Pi + m\mb{I} \right).
\end{align*}
Thus,
\begin{align*}
\mb{E}_{\Mr{U}} \mc{B}(U)=  \frac{ (2n \pmb\Pi - m\mb{I}_{2n} ) \left(\Mr{Tr}[B]\right)^2  }{2n(4n^2-1)} +  \frac{ (2 mn\mb{I}_{2n} - \pmb\Pi )\Mr{Tr}[B^2] }{2n(4n^2-1)}+  \frac{\Mr{Tr}[A^2] }{2n(2n+1)} \left(\pmb\Pi + m\mb{I} \right).
\end{align*}
And we then get
%
%
%
%
%
\begin{align*}
&\pmb\Pi\mb{E}_{\Mr{U}}\mc{B}(U)\pmb\Pi=\left( \frac{(2n-m)}{2n(4n^2-1)} \left(\Mr{Tr}[B]\right)^2 + \frac{(2mn-1)}{2n(4n^2-1)}\Mr{Tr}\left[B^2\right]  +\frac{\left(m+1\right)\Mr{Tr}\left[A^2\right]}{2n(2n+1)} \right)\pmb\Pi.
\end{align*}
For large $n$, we have
\begin{align}
&\Mr{Tr}\left[\pmb\Pi\mb{E}_{\Mr{U}}\mc{B}(U)\pmb\Pi\right]=\frac{m}{4n^2}\left( 1+ \ord{-1}\right) \left(\Mr{Tr}[B]\right)^2 +\ord{-2}\Mr{Tr}\left[B^2\right]  +\ord{-2}\Mr{Tr}\left[A^2\right].
\end{align}
Using  $\Norm{B}{1} =\Mr{Tr}[B]=4n\nu_{\tm}$, $\Norm{B}{\infty}=O\left(n^{\beta}\right)$, and $
\Mr{Tr}\left[A^2\right]\leq \Mr{Tr}\left[B^2\right]\leq \Norm{B}{1}\Norm{B}{\infty}= \nu_{\tm} O\left(n^{\beta+1}\right)$, we have
\begin{align}
\Mr{Tr}\left[\pmb\Pi\mb{E}_{\Mr{U}}\mc{B}(U)\pmb\Pi\right]
&=4m \nu_{\tm}^2 \left( 1+ \ord{-1}\right)+\nu_{\tm} \ord{\beta-1} \nonumber\\
&=4m \nu_{\tm}^2  +  \ord{\beta-1}.
\end{align}
Thus, we have
\begin{align}
\label{eq:av2}
4\Mr{Tr}\left[\mb{E}_{\Mr{U}}\pmb\Gamma^2_m\right]
&=8m\nu_{\tm}^2+ O\left(n^{\beta-1}\right).
\end{align}
Combining Eqs. \eqref{eq:av1} and \eqref{eq:av2}, we have
\begin{align*}
\mb{E}_{\Mr{U}}T(U)&=\mb{E}_{\Mr{U}}\Mr{Tr}\left[\pmb\Gamma^2_m\right]-2\nu_{\tm}\mb{E}_{\Mr{U}}\Mr{Tr}\left[\pmb\Gamma_m\right]+2m\nu^2_{\tm}\\
&=2m\nu_{\tm}^2 + O\left(n^{\beta-1}\right) - 4m\nu_{\tm}^2+2m\nu_{\tm}^2\\
&=O\left(n^{\beta-1}\right).
\end{align*}
Thus, there exists a universal constant $\gamma>0$ such that $\mb{E}_{\Mr{U}}T(U)\leq \gamma n^{\beta-1}$. 

Next, we bound the Lipschitz constant for the function $T(U)$. Let $\pmb\Gamma_m(U)$ and $\pmb\Gamma_m(V)$ be two covariance matrices generated via unitaries $U$ and $V$, respectively. Also, let us denote $\pmb\Gamma_m(U)$ by $\pmb\Gamma_m$ and $\pmb\Gamma_m(V)$ by $\wt{\pmb\Gamma}_m$. Then we have
\begin{align}
\label{eq:gammam1}
|T(U)-T(V)|
&\leq \left|\Mr{Tr}\left[\pmb\Gamma^2_m-\wt{\pmb\Gamma}^2_m\right]\right|+2\nu_{th}\left|\Mr{Tr}\left[\pmb\Gamma_m-\wt{\pmb\Gamma}_m\right]\right|\nonumber\\
&\leq \Norm{\pmb\Gamma^2_m-\wt{\pmb\Gamma}^2_m}{1}+2\nu_{th}\Norm{\pmb\Gamma_m-\wt{\pmb\Gamma}_m}{1}\nonumber\\
&\leq \left(\Norm{\pmb\Gamma_m}{\infty}+ \Norm{\wt{\pmb\Gamma}_m}{\infty}+2\nu_{th}\right)\Norm{\pmb\Gamma_m-\wt{\pmb\Gamma}_m}{1}\nonumber\\
&\leq 2\Norm{\wt{J}_{2n}({\bf z})}{\infty}\Norm{\pmb\Gamma_m-\wt{\pmb\Gamma}_m}{1}\nonumber\\
&\leq 2\sqrt{2m}\Norm{\wt{J}_{2n}({\bf z})}{\infty}\Norm{\pmb\Gamma_m-\wt{\pmb\Gamma}_m}{2},
\end{align}
where we have used $\max\left\{\Norm{\pmb\Gamma_m}{\infty},\Norm{\wt{\pmb\Gamma}_m}{\infty}\right \} \leq \Norm{\wt{J}_{2n}({\bf z})}{\infty}/2$ and $2 \nu_{th} \leq \Norm{\wt{J}_{2n}({\bf z})}{\infty}$.
Further, we have
\begin{align}
\label{eq:gammam2}
\Norm{\pmb\Gamma_m-\wt{\pmb\Gamma}_m}{2}
&\leq \frac{1}{2}	\Norm{ F(U) \wt{J}_{2n}({\bf z}) F(U)^T -  F(V) \wt{J}_{2n}({\bf z}) F(V)^T }{2}\nonumber\\
&\leq	 \frac{1}{2} \Norm{ F(U) \wt{J}_{2n}({\bf z})\left(F(U)-F(V)\right)^T}{2}+\frac{1}{2}\Norm{ \left(F(U)-F(V)\right)\wt{J}_{2n}({\bf z}) F(V)^T }{2}\nonumber\\
&\leq	 \Norm{ \wt{J}_{2n}({\bf z})}{\infty} \Norm{F(U)-F(V)}{2}\nonumber\\
&=\Norm{ \wt{J}_{2n}({\bf z})}{\infty} \Norm{U\oplus U^*-V\oplus V^*}{2}\nonumber\\
&\leq	2\Norm{ \wt{J}_{2n}({\bf z})}{\infty} \Norm{U-V}{2}.
\end{align}
Thus,
\begin{align*}
|T(U)-T(V)|&\leq 4\sqrt{2m}\Norm{ \wt{J}_{2n}({\bf z})}{\infty}^2\Norm{U-V}{2}\\
&= \ord{2\beta}\Norm{U-V}{2},
\end{align*}
where we have used the fact that $\Norm{ \wt{J}_{2n}({\bf z})}{\infty}=\ord{\beta} $. Thus, the Lipschitz constant $L$ for the function $T(U)$ is equal to $\ord{2\beta}$. Now, we use concentration of measure phenomenon to the function $T(U) $ of random  unitaries $U$. Let us take
\begin{align*}
\epsilon > 2 \gamma n^{\beta-1},
\end{align*}
where $\gamma$ is a universal constant. Then we have
\begin{align*}
\Mr{Pr}\left[T(U)>\epsilon\right]&\leq \Mr{Pr}\left[T(U)>\frac{\epsilon}{2}+\mb{E}_{\Mr{U}}T(U)\right]\\
&\leq \exp\left[-\frac{n\epsilon^2}{48L^2}\right]\\
&\leq \exp\left[-\wt{\gamma} \epsilon^2 n^{1-4\beta}\right],
\end{align*}
where the second inequality follows from the concentration of measure phenomenon (see Appendix \ref{sec:lip}) and $\wt{\gamma} $ is a suitable universal constant. This concludes the proof of the Lemma \ref{lem:reg-spectra}.
\end{proof}

\section{Proof of Lemma \ref{lem:symp-sp}}
\label{sec:App-Lemma2}
Here we provide a proof of Lemma \ref{lem:symp-sp}. In a similar way as Lemma \ref{lem:reg-spectra}, Lemma \ref{lem:symp-sp}  establishes the typicality of the symplectic eigenspectra (see also Ref. \cite{Fukuda2019b}). Let us consider a function $\mf{T}_m:\Mr{U}(2n)\rightarrow\mb{R}$ of random unitary matrices defined as
\begin{align}
\mf{T}_m(U)&:=\Mr{Tr}\left[\left((\pmb\Omega\pmb\Gamma_m)^2 + \nu_{th}^2\mb{I}_{2m}\right)^2\right]=2\sum_{k=1}^{m}\left(-\nu_k^2 + \nu_{th}^2\right)^2,
\end{align}
where $\{\nu_k\}_{k=1}^m$ are the symplectic eigenvalues of $\pmb\Gamma_m$ and $\{\pm \mf{i} \nu_k\}_{k=1}^m$ comprises the spectra of matrix $\pmb\Omega\pmb\Gamma_m$. Also, $\pmb\Gamma_m\equiv \pmb\Gamma_m(U)$ is defined by Eq. \eqref{eq:new-cov6} and $\nu_\tm=\Tr[\wt{J}_{2n}({\bf z})]/(8n)$ as before. 

\medskip
\noindent
{\it Restatement of Lemma \ref{lem:symp-sp} [Ref. \cite{Fukuda2019b}]:--}
Let $\pmb\Gamma_{m}$ be the covariance matrix as in Eq.  \eqref{eq:new-cov6}. Further, assume that $8\beta<1$. For universal constants $C,c>0$ such that $\epsilon> Cn^{\beta-1}$, the symplectic eigenvalues $\{\nu_i\}_{i=1}^m$ of $\pmb\Gamma_m$ converge in probability to $\nu_{\tm}$, i.e.,
\begin{align}
\Mr{Pr}\left[ \sum_{i=1}^m\left(\nu_i^2-\nu_{\tm}^2\right)^2>\epsilon \right]\leq \exp\left[ -c\epsilon^2  n^{1-8\beta} \right].
\end{align}

The proof of the above lemma follows from the concentration of measure phenomenon applied to $\mf{T}_m(U)$. The key steps include the calculation of the Lipschitz constant for $\mf{T}_m(U)$ and its average with respect to the unitaries. For completeness, we show that $\mb{E}_U\mf{T}_m(U)=\ord{\beta-1}$ and the Lipschitz constant for $\mf{T}_m(U)$ is given by $\ord{4\beta}$. Note that these results easily follow from Ref. \cite{Fukuda2019b}. 

\begin{proof}
We first compute the average of the function $\mf{T}_m(U)$ over random unitaries. Following Ref. \cite{Fukuda2019b}, we have
\begin{align*}
4\mb{E}_U\Mr{Tr}\left[(\pmb\Omega\pmb\Gamma_m)^2\right]
&=-2m\left[ \frac{2n-m}{2n(4n^2-1)} \left(\Mr{Tr}[B]\right)^2-\frac{m+1}{2n(2n+1)} \Mr{Tr}[A^2] + \frac{mn-1}{2n(4n^2-1)} \Mr{Tr}[B^2]\right]\\
&=-2m\left[ \frac{1}{4n^2}\left(1+\ord{-1}\right)  \left(\Mr{Tr}[B]\right)^2 + \ord{-2} \Mr{Tr}[A^2] + \ord{-2} \Mr{Tr}[B^2]\right].
\end{align*}
Using  $\Norm{B}{1} =\Mr{Tr}[B]=4n\nu_{\tm}$, $\Norm{B}{\infty}=O\left(n^{\beta}\right)$, and $
\Mr{Tr}\left[A^2\right]\leq \Mr{Tr}\left[B^2\right]\leq \Norm{B}{1}\Norm{B}{\infty}= \nu_{\tm} O\left(n^{\beta+1}\right)$, we have
\begin{align*}
&\mb{E}_U\Mr{Tr}\left[(\pmb\Omega\pmb\Gamma_m)^2\right]=-2m\nu_{\tm}^2  + \ord{\beta-1}.
\end{align*}
Similarly, following Ref. \cite{Fukuda2019b}, we have
\begin{align*}
&\mb{E}_U\Mr{Tr}\left[(\pmb\Omega\pmb\Gamma_m)^4\right]=2m\nu_{\tm}^4  + \ord{\beta-1}.
\end{align*}
Thus, 
\begin{align}
\mb{E}_U\mf{T}_m(U)=\ord{\beta-1}.
\end{align}

Now, we compute the Lipschitz constant for the function $\mf{T}_m(U)$. Let $\pmb\Gamma_m(U) \equiv\pmb\Gamma_m$ and $\pmb\Gamma_m(V)\equiv \wt{\pmb\Gamma}_m$. Then, again from Ref. \cite{Fukuda2019b}, we have 
\begin{align*}
|\mf{T}(U)-\mf{T}(V)|
&\leq \left(4\Norm{\wt{J}_{2n}({\bf z})}{\infty}^3 + 4\nu_{th}^2 \Norm{\wt{J}_{2n}({\bf z})}{\infty} \right) \Norm{\pmb\Gamma_m-\wt{\pmb\Gamma}_m}{1}\\
&\leq 5\Norm{\wt{J}_{2n}({\bf z})}{\infty}^3 \Norm{\pmb\Gamma_m-\wt{\pmb\Gamma}_m}{1}\\
&\leq 5\sqrt{2m}\Norm{\wt{J}_{2n}({\bf z})}{\infty}^3\Norm{\pmb\Gamma_m-\wt{\pmb\Gamma}_m}{2},
\end{align*}
where we have used $\max\left\{\Norm{\pmb\Gamma_m}{\infty},\Norm{\wt{\pmb\Gamma}_m}{\infty}\right \} \leq \Norm{\wt{J}_{2n}({\bf z})}{\infty}/2$ and $2 \nu_{th} \leq \Norm{\wt{J}_{2n}({\bf z})}{\infty}$. In Lemma \ref{lem:reg-spectra}, we have proved $\Norm{\pmb\Gamma_m-\wt{\pmb\Gamma}_m}{2} \leq 2 \Norm{\wt{J}_{2n}({\bf z})}{\infty} \Norm{U-V}{2}$, therefore
\begin{align*}
|\mf{T}(U)-\mf{T}(V)|
&\leq 10\sqrt{2m}\Norm{\wt{J}_{2n}({\bf z})}{\infty}^4\Norm{U-V}{2}\\
&= \ord{4\beta}\Norm{U-V}{2},
\end{align*}
where we have used the fact that $\Norm{ \wt{J}_{2n}({\bf z})}{\infty}=\ord{\beta} $. Thus, the Lipschitz constant $L$ for the function $\mf{T}(U)$ is equal to $\ord{4\beta}$.

Now, we use concentration of measure phenomenon to the function $\mf{T}(U) $ of random  unitaries $U$. Let $C$ be a universal constant such that $\mb{E}_U\mf{T}_m(U)\leq C n^{\beta-1}$  and let $\epsilon > C n^{\beta-1}$. Then we have
\begin{align*}
\Mr{Pr}\left[\mf{T}(U)>2\epsilon\right]&\leq \Mr{Pr}\left[T(U)>\epsilon +\mb{E}_{\Mr{U}}T(U)\right]\\
&\leq \exp\left[-\frac{n\epsilon^2}{12L^2}\right]\\
&\leq \exp\left[-c \epsilon^2 n^{1-8\beta}\right],
\end{align*}
where $c$ is a suitable universal constant. This concludes the proof of the Lemma \ref{lem:symp-sp}.

\end{proof}

\appendix

\section{Norms, Lipschitz continuity and concentration of measure phenomenon}
\label{sec:lip}
\noindent
{\bf Matrix norms:--}
Let us consider a vector space $V_n$ of complex $n\times n$ matrices. Let $X, Y\in V_n$, then a matrix norm on $V_n$ is a real-valued non-negative function $\Norm{\cdot}{}:V_n\rightarrow \mb{R}$ satisfying the following properties:
\begin{enumerate}
\item $\Norm{X}{}\geq 0$ while the equality holds if and only if $X=0$. \item $\Norm{\alpha X}{} = |\alpha| \Norm{X}{}$ for all $\alpha\in\mb{C}$. \item $\Norm{X+Y}{}\leq \Norm{X}{} + \Norm{Y}{}$.
\item $\Norm{XY}{}\leq \Norm{X}{}\Norm{Y}{}$.
\end{enumerate}
The last property is called the submultiplicativity \cite{Horn1985}. An important family of matrix norms, called Schatten $p$-norms with $p\geq 1$, is defined as
\begin{align}
\Norm{X}{p}:=\left(\sum_{i=1}^n s_i^p(X)\right)^{1/p},
\end{align}
where $\{s_i\}$ are the singular values of $X\in V_n$. These norms are unitarily invariant, i.e., for unitaries $U,V\in V_n$, $\Norm{UXV}{p}=\Norm{X}{p}$. Of particular importance to us are the cases with $p=1,2,\infty$, which correspond to trace, Hilbert-Schmidt, and operator norms,  respectively. In particular,
\begin{subequations}
\begin{align}
&\Norm{X}{1}:=\Mr{Tr}\left[\sqrt{X^\dagger X}\right];\\
&\Norm{X}{2}:=\sqrt{\Mr{Tr}\left[X^\dagger X\right]};\\
&\Norm{X}{\infty}:=\max_{\vec{x}\neq 0}\frac{\Norm{X\vec{x}}{}}{\Norm{\vec{x}}{}},
\end{align}
\end{subequations}
where $\vec{x}$ is an $n$ dimensional vector and $\Norm{\cdot}{}$ is usual Euclidean norm for vectors. We list some of the relations between these norms that we will be using. Let $X\in V_n$, then
\begin{align}
\Norm{X}{1}\leq \sqrt{n}\Norm{X}{2} \leq n\Norm{X}{\infty}.
\end{align}
Moreover, for $X, Y, Z\in V_n$, we have
\begin{align}
\Norm{XYZ}{p}\leq \Norm{X}{\infty}\Norm{Y}{p}\Norm{Z}{\infty}.
\end{align}

\noindent
{\bf Lipschitz continuity:--}
Let us consider two metric spaces $(X,d_X)$ and $(Y,d_Y)$, where $d_X$ (or $d_Y$) denotes the metric on $X$ (or $Y$). A function $F:X\rightarrow Y$ is said to be a Lipschitz continuous function if for any $x,x'\in X$
\begin{align}
d_Y(F(x)-F(x'))\leq L ~ d_X(x,x'),
\end{align}
where the positive constant $L$ is called the Lipschitz constant \cite{Searcoid2007}. Note that any other constant $L'\geq L$ is also a valid Lipschitz constant. For this work, we are interested in functions $F:\Mr{U}(n)\rightarrow \mb{R}$, where $\Mr{U}(n)$ is the set of $n\times n$ unitary matrices and $\mb{R}$ is the set of real numbers. Such a function $F$ is a Lipschitz continuous function with Lipschitz constant $L$ if for any $U, V\in \Mr{U}(n)$ we have
\begin{align}
\Mod{F(U)-F(V)}\leq L \, \Norm{U-V}{2}.
\end{align}

\noindent
{\bf Concentration of the measure phenomenon:--}
The concentration of the measure phenomenon refers to the collective phenomenon of certain smooth functions defined over measurable vector spaces taking values close to their average values almost surely \cite{Ledoux2005}. There are various versions of concentration inequalities depending on the input measurable space and there are various ways to prove them. A very general technique to prove such inequalities is via logarithmic Sobolev inequalities together with the Herbst argument (this is also called the ``entropy method", see e.g. \cite{Ledoux2005, Anderson2009, Raginsky2013}). Since we are interested in functions on the unitary group $\Mr{U}(n)$, a particularly suitable concentration inequality is given as follows \cite{Meckes2013} (see also \cite{Fukuda2019b}):
\begin{theorem}[\cite{Meckes2013}]
Let $\Mr{U}(n)$ be the group of $n\times n$ unitary matrices which is equipped with the Hilbert-Schmidt norm. Let $F:\Mr{U}(n)\rightarrow \mb{R}$ be a Lipschitz continuous function with Lipschitz constant $L$. Then for any $\epsilon >0$
\begin{align}
\Pr\left[F(U)>\mb{E}_UF +\epsilon\right]<\exp[-\frac{n\epsilon^2}{12L^2}],
\end{align}
where $\mb{E}_U$ denotes the average with respect to Haar measure on $\Mr{U}(n)$. 
\end{theorem}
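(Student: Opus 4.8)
The plan is to prove the stated concentration inequality by the Riemannian curvature route: establish a logarithmic Sobolev inequality (LSI) for the Haar measure on $\Mr{U}(n)$ with a constant that scales linearly in $n$, and then convert it into Gaussian concentration via the Herbst argument. First I would fix the geometry. The group $\Mr{U}(n)$ is a compact connected Lie group whose Lie algebra $\mathfrak{u}(n)$ of skew-Hermitian matrices carries the Hilbert--Schmidt inner product $\langle X,Y\rangle=\Mr{Re}\,\Tr[X^\dagger Y]$; the associated bi-invariant Riemannian metric induces a geodesic distance $d$, and its normalized volume measure coincides with the Haar measure $\mu$. Since a chord never exceeds the corresponding arc, $\Norm{U-V}{2}\leq d(U,V)$, so any $F$ that is $L$-Lipschitz with respect to $\Norm{\cdot}{2}$ is also $L$-Lipschitz with respect to $d$ with the same constant. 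Rescaling $F\mapsto F/L$, it suffices to prove $\Mr{Pr}[F>\mb{E}_U F+\epsilon]\leq\exp[-n\epsilon^2/12]$ for $1$-Lipschitz $F$.

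The core step is an LSI with an $\Omega(n)$ constant. By the Bakry--Émery criterion, a lower Ricci bound $\mathrm{Ric}\geq\kappa\,g$ with $\kappa>0$ implies that $\mu$ satisfies $\mathrm{Ent}_\mu(f^2)\leq(2/\kappa)\int|\nabla f|^2\,\Mr{d}\mu$. For a bi-invariant metric the Ricci tensor equals $-\tfrac14$ times the Killing form, so on the simple part $\mathfrak{su}(n)$ one finds $\mathrm{Ric}=\tfrac{n}{2}\,g$, giving $\kappa=\Omega(n)$. The obstruction is that $\mathfrak{u}(n)$ is not semisimple: along the one-dimensional center $\{\io t\,\mb{I}_n\}$ the Killing form vanishes, so $\mathrm{Ric}=0$ there and Bakry--Émery cannot be applied to $\Mr{U}(n)$ as a whole.

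I would resolve this through the covering homomorphism $\Phi:\Mr{SU}(n)\times\Mr{U}(1)\to\Mr{U}(n)$, $(V,e^{\io\theta})\mapsto e^{\io\theta}V$, which is a local isometry and pushes the product of Haar measures forward to $\mu$. On the $\Mr{SU}(n)$ factor the strictly positive Ricci bound yields an LSI with constant $\Omega(n)$. The delicate point---and the part I expect to be the main obstacle---is the center circle. Naively it is a geodesic of circumference $2\pi\sqrt{n}$ (since $\Norm{\io t\,\mb{I}_n}{2}=t\sqrt{n}$), which would give only an $O(1/n)$ circle-LSI constant and spoil the scaling. The saving observation is that $\Phi$ is $n$-to-one, with deck group $\mb{Z}_n$ acting by $(V,e^{\io\theta})\mapsto(e^{-2\pi\io/n}V,\,e^{2\pi\io/n}e^{\io\theta})$; equivalently, the overall phase is accessible only through $\det U=e^{\io n\theta}$ and hence defined modulo $2\pi/n$. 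Thus any genuine function on $\Mr{U}(n)$ is $2\pi/n$-periodic along the center, so the effective center circle has circumference $2\pi/\sqrt{n}$ and therefore an LSI constant of order $n$ as well. Tensorizing the two LSIs (the product LSI constant is the smaller of the two) and transporting back through the local isometry $\Phi$ gives an LSI for $\mu$ on $\Mr{U}(n)$ with constant $\rho=\Omega(n)$.

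Finally I would run the Herbst argument. Writing $\Lambda(s)=\log\mb{E}_U e^{s(F-\mb{E}_U F)}$, the LSI applied to $e^{sF}$ together with $|\nabla F|\leq 1$ yields the differential inequality $\Lambda'(s)/s-\Lambda(s)/s^2\leq 1/(2\rho)$, whence $\Lambda(s)\leq s^2/(2\rho)$; a Chernoff optimization over $s$ then gives $\Mr{Pr}[F-\mb{E}_U F>\epsilon]\leq\exp[-\rho\epsilon^2/2]$. Reinstating the Lipschitz constant ($\epsilon\mapsto\epsilon/L$) and substituting the explicit $\rho$ obtained above produces the bound $\exp[-n\epsilon^2/(12L^2)]$. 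The only nonroutine bookkeeping is tracking the numerical constants through the circle-LSI estimate and the tensorization so as to land on the stated value $12$ (which is deliberately not sharp); for the exact constant I would defer to the computation of Meckes and Meckes rather than optimize it here.
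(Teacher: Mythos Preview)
The paper does not give its own proof of this statement: it is quoted as a known result from \cite{Meckes2013}, with only the remark that such inequalities are obtained ``via logarithmic Sobolev inequalities together with the Herbst argument.'' Your proposal is precisely an execution of that route, so there is nothing in the paper to compare against beyond the one-line pointer, and your outline is consistent with it.

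On the substance of your sketch: the geometry is set up correctly (chord--arc comparison, Ricci of the bi-invariant metric on $\Mr{SU}(n)$ equal to $\tfrac{n}{2}g$ in the Hilbert--Schmidt normalization), and you have identified the genuine obstacle, namely the flat center of $\mf{u}(n)$. Your resolution through the $n$-fold cover $\Mr{SU}(n)\times\Mr{U}(1)\to\Mr{U}(n)$ is the standard one and is sound: functions on $\Mr{U}(n)$ pull back to $\mb{Z}_n$-invariant functions, so the relevant circle is the quotient of length $2\pi/\sqrt{n}$, whose log-Sobolev constant is of order $n$; tensorization then gives an LSI on $\Mr{U}(n)$ with constant $\Omega(n)$, and Herbst converts this to sub-Gaussian tails. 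The only place to be careful is the bookkeeping for the explicit constant $12$: the circle LSI constant and the $\Mr{SU}(n)$ Bakry--\'Emery constant have to be tracked with their exact numerical factors, and you rightly note that the stated $12$ is not sharp and defer to \cite{Meckes2013} for the precise value.
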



\section{Average over unitaries and Weingarten calculus}
\label{sec:Wein}

Computing averages over the Haar measure on the unitary group is an essential part for establishing concentration inequalities for functions on the unitary group. In this section, we present briefly a method of Ref. \cite{Collins2006} to compute averages (see also Ref. \cite{Zhang2014}). Let $\Mr{U}(n)$ be the group of $n\times n$ unitary matrices equipped with the normalized Haar measure and $S_d$ be the symmetric group of $d$ objects. Let $U_{ij}=\bra{i}U\ket{j}$ be the matrix elements of $U\in\Mr{U}(n)$ in the computational basis. Then we have the following formula for the averages:
\begin{align}
\label{eq:wein}
&\mb{E}_U\left[\prod_{a=1}^d U_{i_aj_a} \prod_{b=1}^dU_{i'_bj'_b}^*\right]=\sum_{\pi,\sigma\in S_d} \prod_{a=1}^d \delta_{i_a i'_{\pi(a)}} \prod_{b=1}^d \delta_{j_b j'_{\sigma(b)}}\Mr{Wg}\left(n,\pi^{-1}\sigma\right),
\end{align}
where $\pi$ and $\sigma$ are permutations and the function  $\Mr{Wg}\left(n,\pi^{-1}\sigma\right)$ is called the Weingarten function, defined as
\begin{align}
\Mr{Wg}\left(n,\pi\right)=\frac{1}{d!^2}\sum_{\substack{{\lambda\vdash d}\\{l(\lambda)\leq n}}}\frac{\chi^{\lambda}(\pi)(\chi^{\lambda}(1))^2}{s_{\lambda,n}(1)}.
\end{align}
In the above expression $\lambda$ is a Young tableaux and the sum is over all the Young tableaux with $d$ boxes and rows $l(\lambda)\leq n$. For a given $\lambda$, $\chi^{\lambda}$ is the character corresponding to the irreducible representation labeled as $\lambda$ of $S_d$. $s_{\lambda,n}(1)$ is the dimension of the representation of $\Mr{U}(n)$ corresponding to a tableaux $\lambda$. In this work, we will need to compute averages for $d=2$ case. In this case,
\begin{align}
&\Mr{Wg}\left(n,(1)(2)\right)=\frac{1}{n^2-1};\\
&\Mr{Wg}\left(n,(12)\right)=-\frac{1}{n(n^2-1)}.
\end{align}
\begin{remark}
\label{rem:wein}
From Eq. \eqref{eq:wein} if the number of $U$ terms is different than that of $U^*$, then the expectation in Eq. \eqref{eq:wein} is zero.
\end{remark}

\end{widetext}

\end{document}